\documentclass[acmsmall, screen, nonacm]{acmart}
\usepackage{natbib}

\usepackage{xcolor}

\usepackage{proof}
\usepackage{verbatim}
\usepackage{dapq}
\usepackage{tikz}
\usepackage{chngcntr}
\usepackage{xspace}
\usepackage{quantikz}
\usepackage{braket}
\usepackage[nameinlink]{cleveref}

\AddToHook{cmd/appendix/before}{%
    \crefalias{section}{appendix}%
    \crefalias{subsection}{appendix}
}

\AddToHook{env/example/begin}{\crefalias{theorem}{example}}
\AddToHook{env/lemma/begin}{\crefalias{theorem}{lemma}}
\AddToHook{env/proposition/begin}{\crefalias{theorem}{proposition}}

\newcommand{\nicettfamilysize}{\footnotesize}

\newcommand{\nicettfamily}{%
  \nicettfamilysize\ttfamily%
}%

\lstdefinelanguage{Beluga}{%
  morecomment = [l][\color{gray}]{\%},
  morekeywords = [1]{
    LF, type, ctype, schema, rec, proof, as, case, by, of, unbox, inductive, %
    stratified, some, block, total, mlam, fn, let, in, %
    intros, solve, msplit, suffices, split, by, as, invert, case, unboxed, %
    impossible, undo, toshow, %
    strengthen, %
  },%
  keywordstyle = [1]{\bfseries},%
  morekeywords = [2]{
    unit, arr, c, lam, app, v_lam, v_c, s_app_1, s_app, s_beta, next, refl, halts/m, %
    Unit, Arr, t_lam, t_app, t_c,
    fwd, close, wait, out, inp, pcomp, inl, inr, choice,
    l_fwd1, l_fwd2, l_close, l_wait, l_out, l_inp, l_inl, l_inr, l_choice,
    l_wait2, l_out2, l_out3, l_inp2, l_pcomp1, l_pcomp2, l_inl2, l_inr2,
    l_choice2,
    wtp_fwd, wtp_close, wtp_wait, wtp_out, wtp_inp, wtp_pcomp, wtp_inl, wtp_inr,
    wtp_choice,
    cut1, cut2,
    pair,
    m/u, m/q, m/l, m/w, m/c, match, pat/pair,
    msf/var, msf/var/UL, msf/app/1, msf/var/UU, msf/var/UQ, msf/var/UC,
    bmsf/app, bmsf/lam
  },%
  keywordstyle = [2]{\bfseries\color{purple!90}},%
  morekeywords = [3]{
    tp, tm, val, step, steps, halts, Reduce, RedSub, ctx, nctx, oftype, eq, %
    halts_step, bwd_closed,
    name, hyp, dual, proc, pproc, linear, wtp, equiv, mode, pat,
    msf, bmsf
  },%
  keywordstyle = [3]{\bfseries\color{magenta}},%
  alsoletter=/,%
  columns=flexible,%
  sensitive = true,%
  basicstyle=\nicettfamily,%
  mathescape=true,%
  texcl=true ,%
  escapechar={*},
  literate={%
    {→}{$\rightarrow$ }1%
    {⇒}{$\Rightarrow$}1%
    {->}{$\rightarrow$ }1%
    {par}{$\parr$ }1%
    {times}{$\otimes$}1%
    {bot}{$\bot$}1%
    {⊥}{\(\bot\)}1%
    {||}{\(\|\)}1%
    {⇛}{\(\Rightarrow\)}1%
    {∥}{\(\|\)}1%
    {+}{$\oplus$ }1%
    {ℚ}{$\qubit$}1%
    {⊗}{\(\otimes\) }1%
    {⊸}{\(\multimap\) }1%
    {⅋}{\(\parr\) }1%
    {Φ}{\(\Phi\)}1%
    {⊕}{\(\oplus\) }1%
    {↦}{\(\mapsto\)}1%
    {&}{\(\&\) }1%
    {≻}{\(\geq\) }1%
    {\\}{$\lambda$}1%
    {--nu}{$\nu$}1%
    {beta}{{\color{purple!90}$\beta$}}1
    {⊢}{$\vdash\;$}1%
    {|-}{$\vdash\;$}1%
    {sigma}{$\sigma$}1%
    {Gamma}{$\Gamma$}1%
    {Delta}{$\Delta$}1%
    {Δ}{$\Delta$}1%
  }%
}%

\newcommand{\defor}{\; | \;}

\newcommand{\mkred}[1]{{\color{red}#1}}

\newcommand{\PQD}{HyQ\xspace}

\AtEndPreamble{%
  \theoremstyle{acmdefinition}
  \newtheorem{remark}[theorem]{Remark}
  \newtheorem{case}{Case}
  \newtheorem{subcase}{Subcase}
  \counterwithin*{case}{theorem}
  \counterwithin*{subcase}{case}

}

\begin{document}

\title{Staged Hybrid Quantum-Classical Programming}

\author{Chuta Sano}
\orcid{0000-0002-8179-2307}
\affiliation{
  \department{School of Computer Science}
  \institution{McGill University}
  \country{Canada}
}
\email{chuta.sano@mail.mcgill.ca}         

\author{Peng Fu}
\orcid{0000-0002-3123-0867}
\affiliation{
  \department{Computer Science and Engineering Department}
  \institution{University of South Carolina}
  \country{USA}
}
\email{pfu@cse.sc.edu}

\author{Ryan Kavanagh}
\orcid{0000-0001-9497-4276}
\affiliation{
  \department{Département d'informatique}              
  \institution{Université du Québec à Montréal}            
  \streetaddress{201 avenue du Président-Kennedy}
  \city{Montréal}
  \state{QC}
  \postcode{H2X 3Y8}
  \country{Canada}                    
}
\email{kavanagh.ryan@uqam.ca}

\author{Jennifer Paykin}
\orcid{0009-0008-9502-3219}
\affiliation{
  \department{Department of Computer Science}
  \institution{University of Vermont}
  \country{USA}
}
\email{jennifer.paykin@uvm.edu}

\author{Brigitte Pientka}
\orcid{0000-0002-2549-4276}
\affiliation{
  \department{School of Computer Science} 
  \institution{McGill University}         
  \country{Canada}                        
}
\email{bpientka@cs.mcgill.ca}               

\begin{abstract}
We capture \emph{hybrid quantum-classical computing}. These systems
consist of a classical control system that sends quantum circuits and
receives measurement results from a quantum co-processor. Such systems
allow us to model algorithms that require the classical control system
to generate quantum circuits on the fly, potentially based on prior
measurement results. This is challenging, as the classical control
system must generate quantum circuits that manipulate live quantum
states, e.g., for quantum error correction. 
In this setting, the classical control system is generating further quantum circuits while the quantum co-processor is internally maintaining the state of the live qubits; this is not ideal, since this is not only costly, in the sense of energy consumption, but also introduces additional sources of noise to the live qubits.
Thus, we want to minimize the latency between receiving measurement results
and sending the next quantum circuit to be executed by pre-computing
quantum circuits.

We introduce \PQD (pronounced \textit{haiku}), a multi-modal language
based on adjoint logic that pre-generates quantum circuits before
executing a hybrid quantum-classical program. We achieve this 
by separating our semantics into two distinct stages: 
1) compile-time generation of quantum circuits and classical runtime code and 2)
execution of the classical runtime code that instruments the quantum
co-processor. 
%
This separation between stages allows us to formally guarantee that
all circuit-generation logic occurs before the instrumentation logic,
i.e., the actual runtime, and minimizes the idling of the quantum
co-processor at runtime. 
We give a type system, a circuit-normalization semantics for the
compile-time stage, which eagerly performs all circuit-generation
logic, and a runtime semantics for \PQD that corresponds to the second
stage. We prove type preservation and progress for both semantics.
We further mechanize a superset of \PQD in the Beluga proof assistant.
\end{abstract}

\date{\today}
\maketitle{}

\section{Introduction}

A promising style of quantum languages, such as Quipper~\cite{Green13pldi} and QWIRE~\cite{Paykin17popl}, devise a functional language that generates and runs quantum circuits.
This integration of a functional language, intended to run on a classical processor, with quantum circuits, intended to run on a quantum co-processor, falls in line with the well-studied QRAM model~\cite{Knill1996tr}.
In this model, a classical control system is responsible for performing ``ordinary'' computations and delegates quantum computations to a quantum co-processor by sending it quantum-specific instructions in the form of quantum circuits.
The quantum co-processor then returns probabilistic measurement results to the classical system, which in turn may determine new circuits to pass to the quantum co-processor for execution, and so on.
In its simplist form, each call to the quantum device starts from a blank slate, and newly generated circuits never depend on any existing quantum states from prior iterations.

\begin{figure}[h!]
\begin{tikzpicture}
  \node[draw, rectangle, minimum width=3.5cm, minimum height=1.5cm] (C) {Classical Control System};
  \node[draw, rectangle, minimum width=3.5cm, minimum height=1.5cm, right=5cm of C] (Q) {Quantum Co-processor};

  \draw[->, thick] ($(C.east)+(0,0.6)$) .. controls +(.8,0) and +(-.8,0) .. 
    node[midway, above] {Circuit Commands}
    ($(Q.west)+(0,0.6)$);
  \draw[->, thick]  (Q.west) --
    node[midway, above] {Measurement Results}
    (C.east);
  \draw[->, thick] ($(C.east)+(0,-0.6)$) .. controls +(.8,0) and +(-.8,0) ..
    node[midway, above] {\ldots}
    ($(Q.west)+(0,-0.6)$);
\end{tikzpicture}
  \caption{The basic QRAM model of quantum computation. The classical control system generates quantum circuits and passes them to the quantum co-processor for execution. The quantum co-processor returns measurement results to the classical control system, which may in turn generate new circuits to pass to the quantum co-processor.}
\label{fig:intro-qram}
\end{figure}

\Cref{fig:intro-qram} depicts a simplified view of this model, where the classical control system cannot depend on any live quantum states from the quantum co-processor when generating subsequent circuits.
In contrast, a challenge arises when an algorithm requires real-time hybrid quantum programs~\cite{lubinski2022advancing}, also referred to as \emph{adaptive measurement}~\cite{raussendorf2001oneway} or \emph{dynamic lifting}~\cite{Green13pldi,Fu23popl}. These terms refer to when the choices of quantum operations depend on measurement results, all within the limited coherence lifetime of the quantum computer. 
An important example is quantum error correction and lattice surgery, where repeated measurements are used to detect and correct errors as they occur. Other examples include distributed quantum algorithms such as quantum teleportation; repeat-until-success circuits; and measurement-based quantum computing---all of which involve dynamic corrections to the state of a live quantum system.

The term dynamic lifting in particular is used in the context of quantum circuit generation and compilation, where a compiler must balance the expressivity of dynamic control with the cost of compiling and optimizing the resulting hybrid program.
Past formulations of dynamic lifting~\cite{Fu23popl}
generate quantum circuits on-the-fly based on prior measurement results.
However, this runtime circuit generation introduces a non-trivial runtime cost, and importantly, is a cost that is incurred on the quantum co-processor. 

As an example, consider the following pseudocode that generates a quantum circuit based on the measurement result of some qubit $q1$:

\begin{lstlisting}
let q1, q2 = some_initialization_logic
if (measure q1) then % generate next circuit for the quantum co-processor
else % generate a different circuit for the quantum co-processor
\end{lstlisting}
In either branch of the conditional, the classical control system must generate a quantum circuit to pass to the quantum co-processor, and during this classical computation, the quantum co-processor must maintain the state of $q2$ (and any other live qubits).

However, compiling quantum programs to executable quantum circuits is expensive. Compilation pipelines include aggressive optimization passes, transpilation to a target gateset, and placement and routing passes to ensure the circuit operations respect architecture connectivity constraints. As a result, compilation is orders of magnitude slower than the lifetime of qubits on state-of-the-art quantum computers.
These algorithms are therefore challenging to implement and reliably execute, because qubits are noisy and can decohere over the time it takes for the classical processor to generate the next quantum circuit to execute.

Thus, we have a trade-off: on one hand, we can have smaller binaries that contain runtime circuit generation logic but incur a runtime cost that is indirectly paid by the quantum co-processor, and on the other hand, we can have larger binaries that pre-compute circuits at compile time but minimizes time that the quantum co-processor is ``idling''.
This is a trade-off that, we argue, is often worth making due to the fact that quantum co-processors are far more expensive and limited than classical resources such as memory.
Thus, we are interested in a language that ensures that all quantum circuits are \textbf{pre-generated} at compile-time to minimize the idling of the quantum co-processor at runtime.

This idea of pre-generating quantum circuits heavily draws on \emph{quantum kernels} in programming languages such as QCor~\cite{mintz2020qcor} and the Intel Quantum SDK~\cite{khalate2022llvm}. These systems compile high-level hybrid quantum-classical programs into separate quantum kernels representing the circuits to run on the quantum machine, which are selected between by classical logic to be run on the classical machine.
As far as we are aware, no prior work has given a formal presentation of such programming paradigms, such as a formal semantics, proof of correctness, or type system.
The Intel Quantum SDK, for instance, relies on users to syntactically separate quantum kernels but lacks complete compile-time checking; this results in subtle issues where the compiler incorrectly orders classical logic before its intended quantum logic~\cite{intel_quantum_sdk_issues}.

In this paper, we introduce \PQD, a multi-modal language based on adjoint logic~\cite{Pruiksma18, Jang24fscd} where we separate
our execution into two stages: (1) circuit-generation time, where quantum circuits are generated and normalized; and (2) runtime, where a minimalistic classical control program instruments the quantum computation.
Broadly speaking, \PQD consists of a functional metaprogramming language, a quantum circuit description language, and a classical runtime language. The metaprogramming language allows programmers to programmatically generate quantum circuits and classical runtime code, which embeds these generated quantum circuits.
The quantum circuit language is a first-order linear $\lambda$-calculus with uninterpreted gate constants where functionals correspond to quantum circuits and gates.
The classical runtime language is a minimalistic functional language that (1) embeds quantum circuits; (2) request measurements on qubits; and (3) can branch based on measurement results.

From the design perspective, the major departing feature of \PQD from prior works is the language-level separation of ``classical computation for circuit generation (stage 1),'' which is run on a classical computer without restrictions, and ``classical computation for runtime (stage 2),'' which are run on a low-power classical control system.
This two-staged separation enables us to cleanly distinguish between classical logic that generates quantum circuits and classical logic that instruments the quantum co-processor's classical control, which notably enables us to enforce that all quantum circuit generation logic \textbf{occurs entirely before} the instrumentation logic, i.e., the runtime involving the quantum co-processor.
This is formally captured through \PQD having two distinct operational semantics: a circuit-normalization semantics that performs all circuit generation logic, ensuring that all quantum circuits are pre-generated, and a runtime semantics that executes the classical control program that instruments the quantum co-processor.

Another advantage of our modular design is that we can easily extend each language with additional features without affecting the other languages. In particular, extensions to the (classical) metaprogramming language will not affect the available features of the (classical) runtime language, which is intended to be minimalistic to model realistic systems.

In short, we make the following contributions:
\begin{itemize}
  \item We introduce \PQD, a foundational calculus that captures the emerging paradigm of pre-generating quantum circuits before executing a hybrid quantum-classical program.
  \item We formalize \PQD by providing its statics and two semantics corresponding to the two stages we explain previously.
    The first stage is formalized by a circuit-normalization semantics, and the second stage is formalized by a runtime semantics. We prove type preservation and progress for both semantics, and we also prove a normal-forms theorem for the circuit-normalization semantics, proving that our language indeed pre-generates all quantum circuits during circuit-generation time.
  \item We mechanize a superset of \PQD, alongside its typing and two semantics, in the Beluga proof assistant. We use an intrinsically-typed encoding, thereby obtaining (partial) type preservation for free, and we fully mechanize the progress proofs of both semantics.
\end{itemize}

\section{Gentle Introduction to \PQD through Examples}
We first informally introduce features of \PQD through motivating examples.
Fundamentally, \PQD is a collection of three languages (or layers) that serve different purposes: a quantum layer that encodes quantum circuits, a classical runtime layer that encodes the classical component that interacts with the quantum layer, and a functional metaprogramming layer that generates and manipulates both quantum and classical runtime programs.
Our design is given in \Cref{fig:mot-pqd}, which contrasts with the QRAM model given in \Cref{fig:intro-qram} where the classical control system is responsible for generating quantum circuits and passing them on the fly. In \PQD, all quantum circuits that the classical control system can send are pre-generated before execution by the metaprogramming language.

\begin{figure}[h!]
\begin{tikzpicture}[>=Latex, every node/.style={font=\sffamily},
  box/.style={draw, rectangle, minimum width=2cm, minimum height=1.5cm, align=center}]
  \node[box, thick, anchor=south] (C) at (0,0) {Classical\\Runtime};
  \node[box, thick, anchor=south] (Q) at (6.5, 0) {Quantum\\Co-processor};
  \node[draw, thick, anchor=south, rectangle, minimum width=3cm, minimum height=3cm, align=center] (M) at (-4.5, 0) {Metaprogramming\\Language};
  \draw[->] ($(C.east)+(0,0.6)$) .. controls +(.8,0) and +(-.8,0) ..
    node[midway, below] {Circuit Commands} ($(Q.west)+(0,0.6)$);
  \draw[->] ($(Q.west)+(0,-0.6)$) .. controls +(-.8,0) and +(.8,0) ..
    node[midway, above] {Measurement Results}
    ($(C.east)+(0,-0.6)$);

  \draw[->, thick, shorten >=1pt] ($(M.north east)-(0,0.20)$) -| ($(C.north)+(0,0.15)$)
    node[midway, above, xshift=2cm] {Generate Hybrid Program};
  \draw[->, thick, shorten >=1pt] ($(M.north east)-(0,0.20)$) -| ($(Q.north)+(0,0.15)$);
\end{tikzpicture}
\caption{A high-level view of \PQD, where the metaprogramming language generates both classical runtime code and quantum circuits, which are then executed by the classical control system and quantum co-processor as in \Cref{fig:intro-qram}. Importantly, all circuit payloads are pre-generated by the metaprogramming language, so the classical control system does not compute any quantum circuits.}
\label{fig:mot-pqd}
\end{figure}

\subsection{Circuit Generation}
The core programming discipline we capture is that of a metaprogramming language that generates hybrid quantum-classical programs.
One component of these programs are the quantum circuits, which are encoded as terms in a first-order linear $\lambda$-calculus.
In this encoding, quantum gates correspond to uninterpreted function constants that map (tuples of) qubits to (tuples of qubits).
Linearity ensures that quantum data cannot be duplicated nor discarded; preventing duplication is necessary to capture the no-cloning theorem, and preventing discarding ensures that the quantum co-processor knows explicitly when it can re-use its physical qubits.

\begin{example}
\label{ex:bell}
A classic example of a quantum circuit is the Bell circuit, which entangles two qubits, both assumed to be in the basis states, into a so-called Bell pair.
A diagramatic representation of the circuit is given in \Cref{fig:mot-bell}.
\begin{figure}[t]
\begin{quantikz}
  \lstick{$q1$} & \gate{H} & \ctrl{1} & \qw \\
  \lstick{$q2$} &  & \targ{} & \qw
\end{quantikz}
\caption{Circuit diagram of the Bell circuit, operating on input qubits $q1$ and $q2$. It first applies $H$ to $q1$ and then a $CNOT$ gate to $q1$ and $q2$, with $q1$ as the control.}
\label{fig:mot-bell}
\end{figure}
The following program captures this circuit in the metaprogramming language as a suspended quantum circuit:
\begin{lstlisting}
bell_circ = susp(fn x. let <q1, q2> = x in CNOT <H q1, q2>)
\end{lstlisting}
where \lstinline|<_, _>| is the constructor (and pattern) for (multiplicative) pairs.
\end{example}

The \lstinline|susp| constructor is a \emph{language barrier} that enables the metaprogramming language to capture and suspend quantum circuit language code.
The inner function of type $\qubit \otimes \qubit \multimap \qubit \otimes \qubit$, where $\qubit$ is a base type denoting \emph{qubits}, describes the circuit itself where function application corresponds to gate application.
Both \lstinline|H| and \lstinline|CNOT| are gate constants in the quantum circuit language, which are of type $\qubit \multimap \qubit$ and $(\qubit \otimes \qubit) \multimap (\qubit \otimes \qubit)$ respectively.
In the remaining examples, our language assumes a universal set of gate constants with the appropriate types.

These circuits can be programmatically composed; for example, we can re-use the prior implementation of \lstinline|bell_circ| to create a circuit that operates on four qubits to run two Bell circuits in parallel.
\begin{lstlisting}
parallel_bell = susp(fn x. let <q1, q2, q3, q4> = x in
  <force(bell_circ) <q1, q2>,   % call bell_circ on q1, q2
   force(bell_circ) <q3, q4> >) % call bell_circ on q3, q4
\end{lstlisting}
Here, \lstinline|force| is the eliminator for \lstinline|susp|, and it enables the quantum circuit code to refer back to the metaprogramming language.
We have taken some syntactic liberties above for presentation purposes; in particular, we matched on the input $x$ as a $4$-tuple despite only having binary products.
In the remaining examples, we will similarly take syntactic liberties for presentation purposes.

The ability for quantum circuits to refer to metaprogramming language terms enables programmatically generating families of quantum circuits that leverage high-level features of the metaprogramming language.

\begin{example}
We extend our metaprogramming language with recursion and lists to define a function
that takes as input a list of $n$-ary circuits and ``folds'' it to generate a single $n$-ary circuit that applies all the circuits in the list in sequence for some fixed $n$:
\label{ex:fold}
\begin{lstlisting}
fold_circuits cs = match cs with
  | nil -> susp(fn x. x) % generate the identity circuit
  | c::cs' -> susp(fn x. (force (fold_circuits cs')) ((force c) x))
\end{lstlisting}
\end{example}
The base case gives the identity circuit that does nothing on its input.
For the recursive case, it applies the head of the list \lstinline|c| to the input \lstinline|x| and then applies the result to the circuit generated by recursively folding the tail of the list \lstinline|cs'|.

The metaprogramming language is not limited to generating quantum circuits with input wires.
It can also generate quantum circuits that generate qubits from scratch through an \lstinline|init| constant, which, like other gates, is a constant of function type: $1 \multimap \qubit$, where $1$ is the (multiplicative) unit type.
For example, the following program generates a quantum circuit that creates a Bell pair without any input qubits:
\begin{lstlisting}
bell_pair = susp((force bell_circ) <init <>, init <>>)
\end{lstlisting}
where \lstinline|<>| is the trivial constructor of the unit type $1$.

\subsection{Classical Code Generation}
As we explain previously, quantum systems consist of a classical control system and a quantum co-processor.
We model this control system through another language, which is a minimalistic functional language that (1) embeds quantum circuits; (2) request measurements on qubits; and (3) can branch based on measurement results.

The following simple program generates a classical runtime program that runs the \lstinline|bell_pair| circuit to obtain two qubits \lstinline|q1, q2| and measures \lstinline|q1|:
\begin{lstlisting}
let down <q1:qubit, q2:qubit> = down (force bell_pair) in
if (measure q1) then % do something
else % do something else
\end{lstlisting}
The classical runtime language uses \lstinline|down| to refer to quantum circuits and their results; in particular, the inner term of the first line, \lstinline|force bell_pair|, is a quantum language term of type $\qubit \otimes \qubit$.
It then decomposes the resulting pair of qubits into \lstinline|q1| and \lstinline|q2|, which, in future examples, we will inline together with the \lstinline|let down| eliminator.
Finally, it \emph{measures} the first qubit \lstinline|q1| through the \lstinline|measure| primitive, which converts a qubit into a classical bit, which can then be used in a conditional.

\begin{example}
A simple program that uses dynamic lifting is the quantum teleportation protocol, which we implement as a function in the classical runtime language. Teleportation takes as input a reference to a live qubit and outputs a reference to a live qubit.
The diagrammatic representation of the protocol (see \Cref{fig:mot-teleport}) indicates that the last two last two operators $\gsty{Z}$ and $\gsty{X}$ are only applied if the corresponding measurement results produce a $1$. 
\begin{figure}[t]
\begin{quantikz}[wire types={q,n,n}]
  \lstick{$q$} &[1em] \gategroup[3, steps=7]{Pure Quantum} & & & &  & \ctrl{1} & \gate{H} & \meter{}\wire[d][2]{c}    \\
  & & & \lstick{$a = \ket{0}$} & \gate{H} \setwiretype{q}  & \ctrl{1} & \targ{} &   & & \meter{}\wire[d][1]{c}  \\
  & & & \lstick{$b = \ket{0}$} & \setwiretype{q}  & \targ{} &  & &\gate{Z}  &  \gate{X} &
\end{quantikz}
\caption{A diagrammatic representation of the quantum teleportation protocol, where $q$ is the qubit to be teleported, and $a, b$ are auxiliary qubits that are assumed to be in the basis state.
The initial purely quantum parts of the diagram have been separated for presentation purposes, and the remaining two operations on $b$ are classically controlled by the measurement results of $q$ and $a$, respectively. The protocol results in transporting $q$'s state to $b$.}
\label{fig:mot-teleport}
\end{figure}

The protocol begins with a purely quantum circuit that initializes $a, b$ into a Bell pair and applies additional gates to entangle $q$ with $a$ and $b$ and well.
The subsequent instructions in the diagram rely on \emph{measuring} the qubits $q$ and $a$, and applying $\gsty{Z}$ and $\gsty{X}$ gates to $b$ based on those measurement results.
While the individual gate applications are quantum, the decision of whether to apply them is classical, and thus we implement the latter part of the protocol in the classical runtime language.
\begin{lstlisting}[numbers=left, xleftmargin=5.0ex]
teleport = fn (q_ptr: down qubit).
  let down (q:qubit) = q_ptr in
  let down <a:qubit, b:qubit> = down (force bell_pair) 
  let down <q:qubit, a:qubit> = down (CNOT <q, a>) in
  let down (q:qubit) = down (H q) in
  let down (b:qubit) =
    if (measure q) then down (Z b) else down b
  in
  if (measure a) then down (X b) else down b
\end{lstlisting}
Lines $3-5$ implement the initial purely quantum computation, $6-8$ the first classically-controlled operation on $b$, and $9$ the final classically-controlled operation on $b$.
Since classical runtime programs capture actual computations that are run on a hybrid system, we do not expect this implementation, which is a function, to run on its own, but rather be composed in a larger program that provides the necessary input qubit reference.
\end{example}

A more complicated example is a repeat-until-success program~\cite{Paetznick14qic}, a non-deterministic hybrid program that can approximate arbitrary gates.
We implement the $V_3$ gate based on the implementation by \citet{Fu23popl} in \Cref{fig:ex-v3}.
\begin{figure}[t]
\begin{lstlisting}[numbers=left, xleftmargin=5.0ex]
fix v3 : (down qubit) -o (down qubit).
fn x : (down qubit).
let down <a1, a2> = down(
  let <q1, q2> = CNOT <T_inv (H (init ())), H (init ())> in
  <H (T q1), q2>)
in
if (measure a1)
then let down () = down (discard a2) in v3 x % try again
else let down q = x in
  let down <a2', q'> = down(
    let <q1, q2> = CNOT <a2, Z (T q)> in
    <q1, H (T q2)>)
  in
  if (measure a2')
  then v3 (down (Z q')) % try again
  else down q' % success
\end{lstlisting}
\caption{Implementation of the $V_3$ gate in the classical runtime language. It has type $\dqc \qubit \multimap \dqc \qubit$, i.e., a function that takes a reference to a live qubit and outputs another reference to a live qubit. The \lstinline|fix| construct is the usual fixed-point construct that binds \lstinline|v3| for the body, enabling the modeling of the ``retry'' mechanism in this algorithm.}
\label{fig:ex-v3}
\end{figure} 

The metaprogramming language can embed these classical runtime programs as code through a \lstinline|down| constructor (which, we remark, is distinct from the \lstinline|down| constructor used in the classical runtime language to embed quantum terms).
The use of \lstinline|down| to capture classical runtime programs as opposed to \lstinline|susp| to capture quantum circuits stems from a technical reason that we elaborate on in \Cref{ssec:instrumenting}, but for now, we can think of the two constructs as serving analogous purposes.

One could lift \Cref{fig:ex-v3} to the metaprogramming language by wrapping its entire body in the metaprogramming language's \lstinline|down| constructor:
\begin{lstlisting}
v3_program = down(fix v3. ...)
\end{lstlisting}

The results of these captured classical runtime programs can, like in the quantum circuit case, be composed with other classical runtime programs.
For instance, if we have some metaprogramming term \lstinline|v3_input| that generates some classical runtime program that outputs a reference to a live qubit, we can compose the two programs directly in the metaprogramming language as follows:
\begin{lstlisting}
let down f = v3_program in let down x = v3_input in
down(f x)
\end{lstlisting}

In summary, our overall design of a separate classical runtime language that the metaprogramming language can manipulate enables the following features:
\paragraph{1. Separate Generation of Quantum Circuits}
In \Cref{fig:ex-v3}, we inlined all quantum circuits directly in the classical runtime language, partly due to the relative simplicity of the embedded quantum programs, but also for presentation purposes.
Instead, we can leverage the metaprogramming language to generate the quantum circuits separately and simply refer to them via a \lstinline|force|.
For example, we can rewrite lines 3 to 6 of \Cref{fig:ex-v3}:
\begin{lstlisting}
let down <a1, a2> = down(force some_implementation) in ...
\end{lstlisting}
where \lstinline|some_implementation| is the metaprogramming term that generates the quantum circuit in lines 4 and 5 of \Cref{fig:ex-v3}. This enables, for instance, a library of commonly used families of quantum circuits, to be pre-generated and reused in different classical runtime programs.

\paragraph{2. Families of Classical Runtime Programs}
Just as we can leverage high-level programming features in the metaprogramming language to generate families of quantum circuits, we can also leverage these features to generate families of classical runtime programs.
In particular, the aforementioned repeat-until-success style program are among many examples of hybrid quantum-classical programs that simulate quantum gates.
These could be incorporated as metaprograms that generate particular programs that simulate gates, which can then be programatically used in larger quantum algorithms.
More generally, arbitrary hybrid quantum-classical programs can be captured through the metaprogramming language and manipulated, not only giving a great deal of code reusability but also allowing for domain-specific optimizations.

\PQD is therefore aimed to be a foundation for the following programming pipeline to generate hybrid quantum-classical programs:
\begin{enumerate}
  \item The programmer leverages high-level programming features in the metaprogramming language to metaprogram $Q$ code
    -- this gives libraries for families of (purely) quantum circuits.
  \item The programmer, again, leverages high-level programming features to metaprogram $C$ code, which can include $Q$ code generated in the previous step -- this also gives libraries for hybrid quantum algorithms.
  \item The programmer can execute the generated $C$ code on a hybrid quantum-classical system, where any $Q$ code is executed on the quantum co-processor and the $C$ code is executed on the classical control system.
\end{enumerate}

In the remaining sections, we formalize the design of \PQD that we informally introduced here.

\section{Adjoint Logic as a Foundation for Quantum Programming}
\label{sec:statics}
Adjoint Logic~\cite{Pruiksma18} is a framework for combining multiple logical systems, or \emph{modes}, that have different substructural properties.
In general, an adjoint system consists of a set of modes annotated with their substructural properties (weakening and contraction), their permitted type constructors, and a pre-order on the modes that indicate which modes can depend on assumptions of other modes.
For example, LNL~\cite{Benton94csl} can be seen as a two-moded adjoint system consisting of a linear mode $L$ and an unrestricted mode $U$. The pre-order $U \geq L$ models the restriction that proofs of unrestricted propositions cannot depend on linear assumptions.

It has been given a natural deduction-style presentation~\cite{Jang24fscd} to model a family of programming languages and their interactions, where each mode corresponds to a language, and the shifts between modes (which form an adjoint) allow languages of different modes to embed and refer to each other.

Adjoint logic as a foundation for quantum programming has been explored by Kavanagh et al.~\cite{Kavanagh26ppdp}, where they
gave a recontruction of the quantum programming language Proto-Quipper-M~\cite{Rios18qpl} as a three-moded adjoint system.
Their system supports quantum circuit generation but not execution, and they do not consider a classical runtime language.
Our work builds on this design, though we omit several unnecessary constructs and make some minor notational/syntactic differences; we explain more precisely the differences between our variant and the original Proto-Quipper-A in \Cref{sec:rel-work}.
We proceed by introducing the quantum circuit-generation fragment of \PQD.

\subsection{Quantum Circuit Generation}
The circuit generation fragment of \PQD (see \Cref{fig:pqa}) is a three-moded adjoint system consisting of modes $U$, $L$, and $Q$ with the mode pre-order $U \geq L \geq Q$ and $Q \geq L$.
Intuitively, one can think of $U$ (unrestricted) and $L$ (linear) combined forming the functional metaprogramming language, and $Q$ corresponding to the quantum circuit language.

Type judgments are of the form $\tpm{\Gamma}{m}{M}{A}$, where $m$ ranges over the modes $U, L, Q$, and $\Gamma$ is a context that must only contain assumptions of mode $m$ and greater.
The subscripts $m$ are mostly for presentation purposes and may be omitted when the mode is clear from context.
The adjoint structure allow us to capture the common types and syntactic constructs of these languages; in particular, all three modes have (multiplicative) pairs, unit, and function types, though $Q$ functions are restricted to only be first order since they are intended to capture quantum circuits.
We first give this simple fragment, which captures the base features of each individual mode:
\[
  \infer[app]{\tp{\GammaIII}{m}{\appm{M}{N}{m}}{B_m}}
  {\tp{\Gamma_1}{m}{M_m}{A_m \multimap_m B_m}
  &\tpm{\Gamma_2}{m}{N}{A}}
  \quad
  \infer[lam]{\tp{\Gamma}{m}{\lamm{x}{M}{m}}{A_m \multimap_m B_m}}
  {\tpm{\Gamma, x : A_m}{m}{M}{B}}
\]
\[
  \small
  \infer[var]{\tp{\Gamma_U, x : A_m}{m}{x}{A_m}}{}
  \quad
  \infer[triv]{\tp{\Gamma_U}{m}{\trivm{m}}{1_m}}{}
  \quad
  \infer[letu]{\tp{\GammaIII}{m}{\ltrivm{M}{N}{m}}{A_m}}
  {\tpm{\Gamma_1}{m}{M}{1}
  &\tpm{\Gamma_2}{m}{N}{A}}
\]
\[
  \small
  \infer[pair]{\tp{\GammaIII}{m}{\pairm{M}{N}{m}}{A_m \otimes_m B_m}}
  {\tpm{\Gamma_1}{m}{M}{A}
  &\tpm{\Gamma_2}{m}{N}{B}}
  \quad
  \infer[letp]{\tp{\GammaIII}{m}{\lpairm{x}{y}{M}{N}{m}}{C_m}}
  {\tp{\Gamma_1}{m}{M_m}{A_m \otimes_m B_m}
  &\tpm{\Gamma_2, x : A_m, y : B_m}{m}{N}{C}}
\]
All rules shown above are standard, though we note several things. First, the use of $\Gamma_U$ in the axiom rules ($var$ and $triv$) is notation to denote a context that only contains assumptions of mode $U$, since all other modes do not allow weakening.
Next, the notation $\GammaIII$ denotes a context merge between $\Gamma_1$ and $\Gamma_2$, defined below.
\begin{align*}
  (\Gamma_1, x:A_m) \bowtie \Gamma_2 &= (\Gamma_1 \bowtie \Gamma_2), x:A_m
    \tag{for $m \in \{L, Q\}$}
  \\
  \Gamma_1 \bowtie (\Gamma_2, x:A_m) &= (\Gamma_1 \bowtie \Gamma_2), x:A_m
    \tag{for $m \in \{L, Q\}$}
  \\
  (\Gamma_1, x:A_U) \bowtie (\Gamma_2, x:A_U) &= (\Gamma_1 \bowtie \Gamma_2), x:A_U
\end{align*}

\begin{figure}[t]
\begin{centering}
\begin{tikzpicture}[>=stealth, node distance=1cm]
  \tikzset{circle node/.style={circle, draw, minimum size=8mm, inner sep=0pt, font=\sffamily\bfseries}}
  \node[circle node] (U) at (0,2) {$U$};
  \node[circle node] (L) at (0,0) {$L$};
  \node[circle node] (Q) at (2,0) {$Q$};
  \draw[->, bend left=20] (U) to  node[midway, right]{$\downarrow$} (L);
  \draw[->, bend left=20] (L) to node[midway, left]{$\uparrow$} (U);
  \draw[->] (Q) to node[midway, yshift=7]{$\uparrow$} (L);
\end{tikzpicture}
\end{centering}
\caption{Mode structure of \PQD's circuit generation fragment, where $U \geq L \geq Q$ and $Q \geq L$. The arrows indicate shifts between modes that are syntactically allowed.}
\label{fig:pqa}
\end{figure}

Adjoint logic further provides shifts between modes that govern how terms of one mode can refer to terms of another mode.
While \Cref{fig:pqa} shows all shifts that we syntactically allow, we first introduce the general form of the shift rules and then explain the computational meaning of each of the allowed shifts later.
The $\uparrow$ shift enables the embedding of \emph{lower} modes into \emph{higher} modes.
\[
  \infer[{\uparrow} I]{\Gamma \vdash_n \susp{m}{n}{M_m} : {\uparrow_m^n} A_m}
  {\Gamma \vdash_m M_m : A_m
  &\Gamma \geq n}
  \quad
  \infer[{\uparrow} E]{\Gamma \vdash_m \force{m}{n}{M_n} : A_m}
  {\Gamma \vdash_n M_n : {\uparrow_m^n} A_m
  &\Gamma \geq n}
\]
The notation $\Gamma \geq n$ denotes that all assumptions in $\Gamma$ are of mode $n$ or greater,
which captures the invariant that terms can only depend on assumptions of higher (or equal) modes.
We remark that this mode condition is implicitly satisfied in either the introduction or the elimination rule, depending on whether the rules are read top-down (i.e., as an inductive definition) or bottom-up (i.e., as a type-checking algorithm), but we redundantly include it as premises in both rules to avoid confusion.

Intuitively, we think of $\mathsf{susp}\; M$ as a higher mode \emph{suspending} some code of a lower mode, e.g., code generation, and $\mathsf{force} \; M$ as the lower mode $\emph{forcing}$ the execution of the suspended code generated by the higher mode.

On the other hand, the $\downarrow$ shift enables \emph{lower} modes to refer to \emph{higher} modes.
\[
  \infer[{\downarrow} I]{\Gamma \vdash_m \down{n}{m}{M_n} : {\downarrow_m^n} A_n}
  {\Gamma \vdash_n M_n : A_n
  &\Gamma \geq n}
  \quad
  \infer[{\downarrow} E]{\GammaIII \vdash_n \ldown{n}{m}{x}{M_m}{N_n} : B_n}
  {\Gamma_1 \vdash_m M_m : {\downarrow_m^n} A_n
  &\Gamma_2, x : A_n \vdash_n N_n : B_n
  &\Gamma_1 \geq n}
\]
The mode conditions are included in both rules for claritiy, though as in the $\uparrow$ shift case, they are redundant in one of the rules depending on how the rules are read.

Intuitively, we think of $\mathsf{down}\; M$ as the reference to the \emph{result of computing} $M$, and the primary distinction between the two shifts is that a term inside $\mathsf{susp}$ is treated as syntax and therefore not reduced further until it is forced via a $\mathsf{force}$, while a term inside a $\mathsf{down}$, is always reduced.

We now explain features that are specific to the quantum circuit language ($Q$ mode) and the functional metaprogramming language ($U$ and $L$ modes).

\subsubsection{Quantum Circuit Language}
The quantum circuit language $Q$ is a linear $\lambda$-calculus with the base type $\qubit$, denoting a \emph{qubit}.
It has no constructors nor destructors, although we assume the constants $\gsty{init}: 1_Q \multimap_Q \qubit$ and $\gsty{discard}: \qubit \multimap_Q 1_Q$ to encode the initialization and discarding of qubits, respectively.
Since $Q$ is meant to capture quantum circuits, function spaces in $Q$ are restricted to be first-order, which is formally captured by requiring functions to span over \emph{simple types} $U, S$:
\begin{align*}
  U, S \Coloneqq 1_Q \defor \qubit \defor U \otimes_Q S \qquad
  A_Q, B_Q \Coloneqq U \defor U \multimap_Q S
\end{align*}
The language $Q$ also assumes a global signature of gate constants of the appropriate (function) types, where gates are treated as uninterpreted (function) constants.
For example, the Hadamard gate $\gsty{H}$, which operates on a single qubit, is a symbol of type $\qubit \multimap_Q \qubit$, and the $\gsty{CNOT}$ gate, which operates on two qubits has type $(\qubit \otimes_Q \qubit) \multimap_Q (\qubit \otimes_Q \qubit)$.

\begin{example}
\label{pqa:bell} 
The following is the typing derivation of the Bell circuit given in \Cref{ex:bell}:
\[
  \small
  \infer[lam]{\qtp{\cdot}{\lam{x}{\lpairq{q_1}{q_2}{x}{\app{\gsty{CNOT}}{\pairq{\app{\gsty{H}}{q_1}}{q_2}}}}}{\qubit \otimes \qubit \multimap_Q \qubit \otimes_Q \qubit}}
  {\infer[letp]{\qtp{x: \qubit \otimes_Q \qubit}{\lpairq{q_1}{q_2}{x}{\app{\gsty{CNOT}}{\pairq{\app{\gsty{H}}{q_1}}{q_2}}}}{\qubit \otimes_Q \qubit}}
    {\infer[app]{\qtp{q_1 : \qubit, q_2 : \qubit}{\app{\gsty{CNOT}}{\pairq{\app{\gsty{H}}{q_1}}{q_2}}}{\qubit \otimes_Q \qubit}}
      {\infer[]{\qtp{\cdot}{\gsty{CNOT}}{(\qubit \otimes_Q \qubit) \multimap_Q (\qubit \otimes_Q \qubit)}}
       {}
      &\infer[pair]{\qtp{q_1 : \qubit, q_2 : \qubit}{\pairq{\app{\gsty{H}}{q_1}}{q_2}}{\qubit \otimes_Q \qubit}}
       {
         \infer[app]{\qtp{q_1 : \qubit}{\app{\gsty{H}}{q_1}}{\qubit}}
         {\infer[]{\qtp{\cdot}{\gsty{H}}{\qubit \multimap_Q \qubit}}{}
         &\infer[var]{\qtp{q_1 : \qubit}{q_1}{\qubit}}{}}
         &\infer[var]{\qtp{q_2 : \qubit}{q_2}{\qubit}}{}
       }
      }
}}
\]
We omit the typing derivation of $x : \qubit \otimes_Q \qubit \vdash x : \qubit \otimes_Q \qubit$ as a premise in the $letp$ rule for presentation purposes.
\end{example}

\subsubsection{Functional Metaprogramming Language}
The metaprogramming language consists of the modes $U$ and $L$, respectively corresponding to the unrestricted and linear fragments of the language.
We often use $F$ to range over the two modes in subsequent sections.

As a functional language, $F$ internally contains two shifts $\ulu A_L$ and $\dul A_U$.
Intuitively, terms of type $\ulu A_L$ capture code of type $A_L$ that are closed with respect to $L$-moded assumptions, enabling it to be duplicated and discarded in the unrestricted mode $U$.
On the other hand, terms of type $\dul A_U$ can be thought as ($L$-moded) references to unrestricted values of type $A_U$.
The composition $\dul \ulu A_L$ exactly corresponds to the linear exponential $!A_L$, and
for programming purposes, we can essentially treat $U$ and $L$ as two fragments of a linear $\lambda$-calculus.

The functional language is further equipped with the ability to refer to quantum circuit terms via the $\uql$ shift, which, like $\ulu A_L$, captures quantum circuit terms as code.

\begin{example}
\label{pqa:bell-susp}
Since the example given in \Cref{pqa:bell} is a $Q$-moded term, we can capture it as code in the $L$-mode by using the $\uql$ shift:
\[
  \infer[{\uparrow} I]{\ltp{\cdot}{\suspql{\lam{x}{\lpairq{q_1}{q_2}{x}{\app{\gsty{CNOT}}{\pairq{\app{\gsty{H}}{q_1}}{q_2}}}}}}{{\uql} (\qubit \otimes \qubit \multimap_Q \qubit \otimes_Q \qubit)}}
  {
    \infer[lam]{\qtp{\cdot}{\lam{x}{\lpairq{q_1}{q_2}{x}{\app{\gsty{CNOT}}{\pairq{\app{\gsty{H}}{q_1}}{q_2}}}}}{\qubit \otimes \qubit \multimap_Q \qubit \otimes_Q \qubit}}
  {\cdots}
}
\]
\end{example}

\begin{remark}
If a quantum circuit code of type $\uql A_Q$ is closed with respect to $L$ and $Q$ assumptions, then it can be lifted to the unrestricted mode $U$ as a term of type $\ulu \uql A_Q$. At this point, it can be freely duplicated and discarded in the unrestricted mode and forced back to the $L$ (and then $Q$) mode for re-use.
\end{remark}

\subsection{Adding a Classical Runtime Language}
\label{ssec:statics-classical}
The fragment of \PQD that we gave thus far defines a metaprogramming language $U/L$ that can generate and compose quantum circuit programs $Q$.
Instead of adding further extra-logical primitives to the metaprogramming language to capture runtime features, we continue in the philosophy of adjoint logic and introduce a linear mode $C$ and an unrestricted mode $W$ that, together, model a classical runtime language that interacts with the quantum co-processor (as modeled by mode $Q$).
Indeed, as we mentioned previously, no quantum systems run purely quantum code, so this characterization of a classical runtime language that interacts with the quantum layer models the reality of quantum programming more closely.

\paragraph{Mode $C$}
Classical runtime programs must be able to reference live quantum bits and run pre-generated quantum circuits, thus, we must have $Q \geq C$ (therefore $U \geq C$ and $L \geq C$ by transitivity) and $C$ be linear.
We also require that $C \geq L$ (and therefore $C \geq Q$ by transitivity), which we justify later.
Our design of $C$ is motivated by both real-world constraints and examples to be a minimalistic functional language extended with the ability to delegate the execution of $Q$-moded quantum circuits, request measurements on qubits, branch based on these measurement results, and loop.

\paragraph{Mode $W$}
On the other hand, the unrestricted classical runtime mode $W$ exists solely for the purpose of modeling exponentials, for example, to allow duplicating and discarding the result of measurement, which are just classical bits.
It therefore only contains $\uparrow_C^W A_C$ as its allowed type, and analogously, $\mathsf{susp}_C^W M_C$ as its only term.
Most of the discussion that follows is therefore focused on the design of $C$, since $W$ does not interact with $Q$ in any direct way.

We give a graphical summary of our entire system in \Cref{fig:pqd}.

\begin{figure}[!h]
\begin{centering}
\begin{tikzpicture}[>=stealth, node distance=1cm]
  \tikzset{circle node/.style={circle, draw, minimum size=8mm, inner sep=0pt, font=\sffamily\bfseries}}
  \node[circle node] (U) at (0,2) {$U$};
  \node[circle node] (L) at (0,0) {$L$};
  \node[circle node] (Q) at (0,-2) {$Q$};
  \node[circle node, color=red] (C) at (2,-2) {\mkred{$C$}};
  \node[circle node, color=red] (W) at (2,2) {\mkred{$W$}};
  \draw[->, bend left=20] (U) to  node[midway, right]{$\downarrow$} (L);
  \draw[->, bend left=20] (L) to node[midway, left]{$\uparrow$} (U);
  \draw[->] (Q) to node[midway, right]{$\uparrow$} (L);
  \draw[->, color=red] (Q) to node[midway, yshift=7]{\mkred{$\downarrow$}} (C);
  \draw[->, color=red] (C) to node[midway, yshift=10]{\mkred{$\downarrow$}} (L);
  \draw[->, bend left=20, color=red] (C) to node[midway, right]{\mkred{$\uparrow$}} (W);
  \draw[->, bend left=20, color=red] (W) to node[midway, left]{\mkred{$\downarrow$}} (C);
\end{tikzpicture}
\end{centering}
\caption{All modes and their shifts in \PQD, where $U \geq L \geq Q \mkred{\geq C}$, $\mkred{C \geq Q \geq L}$, and $\mkred{W \geq C}$. The modes and shifts that we add to \Cref{fig:pqa} are colored in \mkred{red}.}
\label{fig:pqd}
\end{figure}

\paragraph{Passing Requests to the Quantum Co-Processor}
We add a \emph{restricted} downshift $\dqc U$ that models a request by the classical control to execute (quantum) instructions on the quantum co-processor.
We only allow $\downarrow$ shifts of \emph{simple types} $U$, since it makes no sense for the quantum co-processor to execute a function.

This is not sufficient however; consider the following incomplete derivation of a program that runs $\mathsf{bell\_pair}$ (see \Cref{pqa:bell-susp}) on two initialized qubits:
\[
  \infer[{\downarrow} E]
  {\ctp{\cdot}
    {\ldownqc{x}
      {\app
        {\forcelq{\mathsf{bell\_pair}}}
        {\pairq
          {\app{\gsty{init}}{\trivq}}
          {\app{\gsty{init}}{\trivq}}}
      }
      {N}
    }
    {B}
  }
  {\infer{\ctp{\cdot}{\app
        {\forcelq{\mathsf{bell\_pair}}}
        {\pairq
          {\app{\mathsf{init}}{\trivq}}
          {\app{\mathsf{init}}{\trivq}}}
      }{{\dqc} (\qubit \otimes_Q \qubit)}}{\cdots}
  & \infer{\ctp{x : \qubit \otimes_Q \qubit}{N}{B}}{\cdots}}
\]
The body $N$ cannot split the two qubits in $x$ since $x$ is a $Q$-moded assumption.
Thus, we must also allow pattern matching on $Q$-moded terms in $C$, which amounts to extending $C$ with $\otimes_Q$ and $1_Q$ elimination rules.
This is made possible by the ordering $Q \geq C$, which allows $C$-moded terms to depend on $Q$-moded assumptions.
\[
  \small
  \infer[letu_C^Q]{\ctp{\GammaIII}{\ltrivq{M_Q}{N_C}}{A_C}}
  {\qtp{\Gamma_1}{M_Q}{1_Q}
  &\ctp{\Gamma_2}{N_C}{A_C}}
  \quad
  \infer[letp_C^Q]{\ctp{\GammaIII}{\lpairq{x}{y}{M_Q}{N_C}}{C_C}}
  {\qtp{\Gamma_1}{M_q}{A_Q \otimes_Q B_Q}
  &\ctp{\Gamma_2, x : A_Q, y : B_Q}{N_C}{C_C}}
\]

We can now proceed in our previous example by using $letp_C^Q$ to split the two qubits in $x$ and run further quantum operations on the individual qubits:
\[
  \infer[letp_C^Q]{\ctp{x: \qubit \otimes_Q \qubit}{\lpairq{q_1}{q_2}{x}{N_C'}}{C_C}}
  {\infer[var]{\qtp{x : \qubit \otimes_Q \qubit}{x}{\qubit \otimes_Q \qubit}}{}
  &\infer{\ctp{q_1 : \qubit, q_2 : \qubit}{N_C'}{C_C}}{\cdots}}
\]

\paragraph{Measurement and Branching}
In quantum systems, a \emph{measurement} is a destructive operation that collapses a qubit into a classical bit that the classical system can use.
We therefore add $\cbit$, denoting a bit, as a base type in $C$ in order to store measurement results, and we add a primitive $\mathsf{measure}$ that takes in a qubit and outputs a bit.
We do not consider any binary operators on bits for simplicity, though they can be easily added.

To allow classical bits to be duplicated and discarded, our measurement primitive gives some $\dwc \ucw \cbit$, which again, can be treated as the encoding of $! \cbit$ in a typical linear type system.
Finally, to capture branching based on measurement results, we add an if-then-else construct to mode $C$ that branches on a bit.
\[
  \infer[meas]{\ctp{\Gamma}{\meas{M_Q}}{\dwc \ucw \cbit}}
  {\qtp{\Gamma}{M_Q}{\qubit}}
  \quad
  \infer[if]{\ctp{\GammaIII}{\ifc{M}{N_1}{N_2}}{A_C}}
    {\ctp{\Gamma_1}{M}{\cbit}
    &\ctp{\Gamma_2}{N_1}{A_C}
    &\ctp{\Gamma_2}{N_2}{A_C}}
\]

\begin{remark}
While re-using and discarding measurement results is important, many algorithms (including our examples thus far) only use measurement results once, making it convenient to have a ``linear'' measurement operation to enable inlining measurements with branching.
This can be derived by the syntactic sugar:
\begin{lstlisting}
measure' M := letd x = measure M in force x
\end{lstlisting}
which justifies the inlining of measurements in our examples thus far.
\end{remark}

\paragraph{Loop}
Hybrid quantum-classical algorithms, such as \Cref{fig:ex-v3},
  require an arbitrary number of iterations and do not necessarily terminate; we model this by adding a fixed-point operator to $C$.
We want the bound fixed-point variable to be discardable to model potentially terminating loops, but we do not want it to create a backdoor to duplicate linear resources.
We thus leverage the $\dwc \ucw A = {!A}$ encoding to allow the fixed point to be discardable and require fixed points to be created under a context $\Gamma_W, \Gamma_U$, consisting of only $W$ and $U$ assumptions, which are non-linear.
\[
  \infer[fix]{\ctp{\Gamma_W, \Gamma_U}{\fixc{f}{M}}{C}}
  {\ctp{\Gamma_W, \Gamma_U, f : \dwc \ucw C}{M}{C}}
\]
This is sufficient to model loops in hybrid quantum-classical algorithms, though we can support a more general fixed-point operation by introducing an affine runtime mode.

\subsection{Instrumenting a Quantum Program}
\label{ssec:instrumenting}
We have thus far described $C$ and $W$, which together model a classical control system that interacts with a quantum co-processor modeled by $Q$.
Now, it is natural to extend our metaprogramming language ($U$ and $L$) with the ability to generate $C$ code based on previously computed circuits, i.e., $Q$ code.
Although computational interpretations of adjoint logic typically take the upshift to represent capturing syntax, we instead use the downshift to capture $C$ code, since we want the captured $C$ code (and its embedded $Q$ code) \textbf{to be normalized}.
Thus, we require the pre-order $C \geq L$ so that $\dcl A_C$ is well-formed.

Finally, we want to actually run the generated $C$ code; we proceed by giving two distinct semantics for \PQD: a circuit-normalization semantics, defined for all modes, where values of type $\dcl A_C$ are ``fully normalized'' in the sense that it contains no subterms of modes $U$ and $L$, and a runtime semantics, defined only for normalized $C$-moded terms, that captures the execution of (normalized with respect to the circuit-normalization semantics) $C$-moded code and its interactions with the quantum co-processor.
The former ensures that all quantum circuits that are embedded in the generated $C$ code have been pre-computed.

\section{Operational Semantics and Metatheory}
We make precise our distinction between a ``circuit-normalization'' phase that aims to pre-compute all quantum circuits and a ``runtime'' phase that actually executes the underlying quantum computation.
We describe each semantics in turn.

\subsection{Circuit-Normalization Semantics}
\label{sec:circuit-normalization}
The circuit-normalization semantics for \PQD is given via a term-rewriting small-step semantics, which is heavily based on the semantics of Proto-Quipper-A~\cite{Kavanagh26ppdp}.
While reductions in the metaprogramming language $U/L$ are fairly standard, the semantics for the quantum language $Q$ and the classical runtime language $C$ are carefully designed to ensure that $C$-moded terms and any of its embedded $Q$-moded terms fully
eliminate all of their $U/L$ subterms.
This captures the intention of the circuit-normalization semantics as evaluating the metaprogramming logic in $U/L$ to generate a $C$-moded program that can readily be executed by a quantum system.

\subsubsection{Normal Forms}
\label{sec:circuit-normal-forms}
We let $N_m, M_m$ range over arbitrary terms in mode $m$, $V_m$ over normal forms, and $K_m$ over canonical forms and define each in turn.
We let $g$ range over gate constants, which we treat abstractly.
Since $Q$ and $C$ contain uninterpreted constants during the compilation phase (gate constants in $Q$ and measurement in $C$),
we also require neutral forms $R_Q$ and $R_C$ to capture terms that are blocked.
\begin{align*}
  V_m &\Coloneqq K_m \tag{for $m \in \{U, L\}$} \\
  V_Q &\Coloneqq K_Q \defor R_Q \defor \ltrivq{R_Q}{V_Q} \defor \lpairq{x}{y}{R_Q}{V_Q} \\
  V_W &\Coloneqq K_W \defor R_W \\
  V_C &\Coloneqq K_C \defor R_C \defor \ltriv{R_m}{V_C}{m} \defor \lpair{x}{y}{R_m}{V_C}{m} \tag{for $m \in \{Q, C\}$} \\
      &\defor \ldown{n}{C}{x}{R_n}{V_C} \defor \ifc{R_C}{V_C}{V_C} \tag{for $n \in \{Q, W\}$} \\
  K_m &\Coloneqq \trivm{m} \defor \pairm{V}{V}{m} \defor \lam{x}{N_m} \defor \susp{n}{m}{M_n} \defor \down{n}{m}{V_n} \tag{for $m \in \{U, L\}$} \\
  K_Q &\Coloneqq \trivq \defor \pairq{V_Q}{V_Q} \defor \lam{x}{V_Q} \\
  K_W &\Coloneqq \susp{C}{W}{M_C} \\
  K_C &\Coloneqq \trivm{C} \defor \pairm{V}{V}{C} \defor \lam{x}{V_C} \defor \down{Q}{C}{V_Q} \defor \down{W}{C}{V_W} \defor \fixc{x}{V_C} \\
  R_Q &\Coloneqq x \defor \app{g}{K_Q} \defor \app{g}{R_Q} \\
  R_W &\Coloneqq x \\
  R_C &\Coloneqq x \defor \app{R_C}{K_C} \defor \app{R_C}{R_C} \defor \meas{V_Q} \defor \force{C}{W}{R_W}
\end{align*}

Intuitively, while values in $U/L/W$ are standard, values in $Q$ and $C$ can begin with a chain of pattern matches (or branching) on neutral terms and end with canonical or neutral terms.
$W$ can also be neutral since we reduce under all binders for $C$ to normalize circuits, and
$C$ can eliminate $W$-moded terms via a $\ldown{W}{C}{x}{\ldots}{M_C}$ construct.

Canonicals capture terms arising from constructors, where we note that $W$ has been intentionally stripped of
all constructors except for $\mathsf{susp}$, since it is only used to model the exponential in $C$.
Neutrals capture terms that are blocked on uninterpreted constants, such as gates and measurement, and variables.
We specifically require $C$ and $Q$-moded functions to fully reduce their bodies, since they may contain $Q$-moded terms that we want to be fully reduced by our semantics.

We remark that $\meas{V_Q}$ is neutral as we treat measurement as an uninterpreted constant during this phase.
Unlike gate applications however, which require its argument to be canonical or neutral, we allow the body of a measurement
to be any normal form, i.e., it can contain a chain of pattern matches on neutral terms.
From a technical standpoint, we could have also required the body of the measurement to be neutral (not canonical since there are
no constructors of type $\qubit$).
However, we made an intentional design choice since commuting out any matches inside the body of a measurement can lead
down the line to multiple requests to the quantum co-processor, which is not desirable.

\subsubsection{Small-step Semantics}
The circuit-normalization semantics is largely based off of the semantics of Proto-Quipper-A and is given by a relation $\cst{M}{M'}$ on terms $M$ and $M'$.
As convention, we use metavariables introduced in the prior section to implicitly enforce that the reduction rules only apply to terms of the shapes (e.g., $V$ for values and $K$ for canonicals).
We begin by introducing the reduction rules for terms that are shared by all modes, consisting of congruence rules for all constructs except functions and $\mathsf{susp}$, which both block reductions in their bodies, and standard $\beta$ rules for each construct.

The following congruence rules are standard.
\[
  \infer{\cst{\app{M}{N}}{\app{M'}{N}}}
    {\cst{M}{M'}}
  \quad
  \infer{\cst{\app{K}{N}}{\app{K}{N'}}}
    {\cst{N}{N'}}
  \quad
  \infer{\cst{\ltriv{M}{N}{m}}{\ltriv{M'}{N}{m}}}
    {\cst{M}{M'}}
\]
\[
  \infer{\cst{\pairm{M}{N}{m}}{\pairm{M'}{N}{m}}}
    {\cst{M}{M'}}
  \quad
  \infer{\cst{\pairm{V}{N}{m}}{\pairm{V}{N'}{m}}}
    {\cst{N}{N'}}
  \quad
  \infer{\cst{\down{n}{m}{M}}{\down{n}{m}{M'}}}
    {\cst{M}{M'}}
\]
\[
  \infer{\cst{\lpair{x}{y}{M}{N}{m}}{\lpair{x}{y}{M'}{N}{m}}}
    {\cst{M}{M'}}
  \quad
  \infer{\cst{\ldown{n}{m}{x}{M}{N}}{\ldown{n}{m}{x}{M'}{N}}}
    {\cst{M}{M'}}
\]

The following congruence rules apply to both $Q$ and $C$, which enable reductions under functions and pattern matches blocked by neutrals. Since $C$ can match on $Q$ and $C$-moded units and pairs, these rules allow reductions in the bodies of these matches as well.
\[
  \infer{\cst{\ltriv{R}{N}{m}}{\ltriv{R}{N'}{m}}}
    {\cst{N}{N'}
    & (\text{where}\;m \in \{Q, C\})}
  \quad
  \infer{\cst{\lpair{x}{y}{R}{N}{m}}{\lpair{x}{y}{R}{N'}{m}}}
    {\cst{N}{N'}
    & (\text{where}\;m \in \{Q, C\})}
\]
\[
  \infer{\cst{\ldown{m}{C}{x}{R}{N}}{\ldown{m}{C}{x}{R}{N'}}}
    {\cst{N}{N'}
    & (\text{where}\;m \in \{Q, W\})}
  \quad
  \infer{\cst{\lam{x}{M_m}}{\lam{x}{M'_m}}}
    {\cst{M_m}{M'_m}
    & (\text{where}\;m \in \{Q, C\})}
\]

The following congruence rules only apply to $C$. In particular, we allow reductions under the fixed-point construct since they can contain quantum circuits that we want to be fully reduced, measurement, and bodies under conditionals that are blocked by neutrals.
\[
  \infer{\cst{\fixc{x}{M_C}}{\fixc{x}{M'_C}}}
    {\cst{M_C}{M'_C}}
  \quad
  \infer{\cst{\meas{M_Q}}{\meas{M'_Q}}}
    {\cst{M_Q}{M'_Q}}
\]
\[
  \infer{\cst{\ifc{R}{N_1}{N_2}}{\ifc{R}{N_1'}{N_2}}}
    {\cst{N_1}{N_1'}}
  \quad
  \infer{\cst{\ifc{R}{V}{N_2}}{\ifc{R}{V}{N_2'}}}
    {\cst{N_2}{N_2'}}
\]

And finally, the following $\beta$ rules are standard, where $[V/x]M$ denotes a capture-avoiding substitution of $V$ for $x$ in $M$, defined
in the usual way (see \Cref*{app:substitution}).
\[
  \infer{\cst{\app{(\lam{x}{M_m})}{V_m}}{[V_m/x]M_m}}{(m \in \{U, L\})}
  \quad
  \infer{\cst{\app{(\lam{x}{V_m})}{V_m}}{[V_m/x]V_C}}{(m \in \{Q, C\})}
\]
\[
  \cst{\force{m}{n}{\susp{m}{n}{M}}}{M}
  \quad
  \cst{\ltriv{\trivm{m}}{N}{m}}{N}
\]
\[
  \cst{\lpair{x}{y}{\pair{V_1}{V_2}{m}}{N}{m}}{[V_1/x, V_2/y]N}
  \quad
  \cst{\ldown{n}{m}{x}{\down{n}{m}{V}}{N}}{[V/x]N}
\]

Since $Q$ and $C$ contain uninterpreted constants that block reductions, we also include commuting conversion rules that allow neutral matches (and conditionals) to commute with eliminators.
We give two representative rules here involving matches on units only; we refer readers to \Cref*{app:compile-commuting} for the full listing of commuting conversion rules involving matches on units.
\[
  \cst{\app{(\ltriv{R}{V_1}{m})}{V_2}}{\ltriv{R}{(\app{V_1}{V_2})}{m}}
  \quad
  \cst{\app{V_1}{(\ltriv{R}{V_2}{m})}}{\ltriv{R}{(\app{V_1}{V_2})}{m}}
\]

\subsubsection{Properties of the Circuit-Normalization Semantics}
The circuit-normalization semantics enjoys type preservation and progress.
First, we state the usual substitution property, required to deal with cases involving substitution in the proof of type preservation.
\begin{lemma}[Substitution Property]
\label{lem:subst}
  If \ $\tp{\Gamma_1, x : A}{m}{M}{B}$ and $\tp{\Gamma_2}{m}{V}{A}$, then
  \\${\tp{\GammaIII}{m}{[V/x]M}{B}}$.
\end{lemma}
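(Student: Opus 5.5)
We prove the statement by induction on the derivation of $\tp{\Gamma_1, x : A}{m}{M}{B}$. The substituted term lives in mode $m$ while the surrounding term $M$ may be of a different mode $k$. So we first generalize: if $\tp{\Gamma_1, x : A_m}{k}{M}{B}$ and $\tp{\Gamma_2}{m}{V}{A_m}$, then $\tp{\Gamma_1 \bowtie \Gamma_2}{k}{[V/x]M}{B}$. This is needed because the rules ${\uparrow}I$, ${\uparrow}E$, ${\downarrow}I$, ${\downarrow}E$, $letp_C^Q$, $letu_C^Q$ and $meas$ change the mode of the premise. The statement as given is the instance $k = m$.

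Before the induction we establish a few routine lemmas about context merge.
\begin{itemize}
\item $\bowtie$ is commutative and associative up to reordering.
\item Exchange is admissible.
\item For $U$-moded assumptions, weakening and contraction are admissible.
\item If $\Gamma_1 \geq n$ and $\Gamma_2 \geq n$ then $\Gamma_1 \bowtie \Gamma_2 \geq n$.
\item A merge $\Gamma_U \bowtie \Gamma_2$ with $\Gamma_U$ purely unrestricted admits $\Gamma_2$ up to weakening.
\end{itemize}

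We then split on the mode of $x$.

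\textbf{Case $m \in \{L, Q, C\}$ (linear).} The variable $x$ occurs in exactly one premise context of each multiplicative rule ($app$, $pair$, $letp$, $letu$, ${\downarrow}E$, $if$, $letp_C^Q$). We apply the induction hypothesis to that premise and reassociate the merge. For $if$, $x$ may lie in the shared $\Gamma_2$ of both branches; we apply the hypothesis to both branches, obtaining the same merged context. For $var$, $M = x$, so $\Gamma_1$ is unrestricted and we conclude by weakening $V$'s derivation. The rules $var$ (for another variable) and $triv$ cannot apply, since they forbid linear assumptions.

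\textbf{Case $m \in \{U, W\}$ (unrestricted).} Here $\Gamma_2$ may be assumed to consist only of unrestricted assumptions; this holds because $\Gamma_2 \geq m$, which in turn follows from the mode invariant on contexts. The variable $x$ may appear in both premise contexts. We apply the hypothesis to each premise and contract the duplicated $\Gamma_2$.

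\textbf{Mode side conditions.} For the shift rules and $fix$, we must check that the side condition $\Gamma \geq n$ (respectively the restriction to $\Gamma_W, \Gamma_U$) is preserved. Since $x : A_m$ is in the original context, $m \geq n$. We also have $\Gamma_2 \geq m$, hence $\Gamma_2 \geq n$ by transitivity of the preorder. For $fix$, $x$ must be $W$ or $U$, so $\Gamma_2$ is also purely $W/U$, and the rule reapplies.

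\textbf{Binders.} Binders ($lam$, $letp$, ${\downarrow}E$, $fix$) use $\alpha$-renaming so that bound variables avoid $\mathrm{dom}(\Gamma_2)$. Substitution then commutes with the binder, and the hypothesis applies with exchange.

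The main obstacle is the cross-mode bookkeeping. We need that $\Gamma_2 \geq m$ holds for any well-typed $V$ of mode $m$, which must be stated as an invariant of the judgment. We also need that the merge interacts correctly with unrestricted variables shared across premises, including $W$ versus $U$ and the merge on $W$-moded assumptions, which the paper's definition of $\bowtie$ lists only for $U$. I would first fix the definition of $\bowtie$ to treat all unrestricted modes uniformly, and then discharge these cases via the merge lemmas above.
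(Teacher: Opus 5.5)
Your strategy is the standard one, and it matches what the paper has in mind: the paper gives no proof and only calls this ``the usual substitution property.'' Generalizing to a variable $x : A_m$ inside a term of mode $k$ (with $m \geq k$) is the right move. It is genuinely needed, not only to carry the induction through $meas$, $letp_C^Q$ and the shift rules, but also for the preservation proof that cites this lemma, since its $\mathsf{let\ down}$ $\beta$-rule substitutes, e.g., a $Q$-value into a $C$-term. Your treatment of the side conditions via $\Gamma_2 \geq m \geq n$ is correct, as is your handling of the additive $if$ and of $fix$.

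One of your ``routine'' lemmas is false, however, and it is exactly the one you invoke in the linear $var$ case: weakening by unrestricted assumptions is \emph{not} admissible in this system. The rules ${\uparrow}I$, ${\uparrow}E$ and ${\downarrow}I$ require the \emph{whole} context to be $\geq n$, and $U$ and $W$ are incomparable. For example, $\cdot \vdash_C \down{W}{C}{\susp{C}{W}{\trivc}} : \dwc\ucw 1_C$ is derivable, but adding $u : B_U$ to the context breaks the ${\downarrow}I$ step because $U \not\geq W$. Symmetrically, $W$-assumptions cannot pass through $\down{U}{L}{-}$ or $\force{L}{U}{-}$.

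The $var$ case must instead be discharged by the shape of the paper's merge, which places every unrestricted assumption in \emph{both} halves. Hence $\Gamma_U \bowtie \Gamma_2$ is defined only when $\Gamma_U$ is precisely the unrestricted part of $\Gamma_2$, and then it \emph{is} $\Gamma_2$, so no weakening is needed. The same observation is what justifies your reassociation and contraction steps (the statement implicitly assumes $\Gamma_1 \bowtie \Gamma_2$ is defined).

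Conversely, do not read $\bowtie$ ``up to weakening,'' i.e.\ as allowing an unrestricted assumption to be dropped on one side. With such a merge the lemma itself fails. Take $M = x$, $\Gamma_1 = u : B_U$, $\Gamma_2 = \cdot$ and $V = \down{W}{C}{\susp{C}{W}{\trivc}}$: both premises hold, but $u : B_U$ does not type $V$.

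Finally, the $W$-clause for $\bowtie$ that you propose to add is already present in the paper's appendix definition of the merge.
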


For our remaining metatheory, we must work on terms with free variables in $C$, $Q$, and $W$ since the circuit-normalization semantics performs reductions under these bindings.
Therefore, we use the notation $\Gamma_{WCQ}$ to denote a context that only contains $W/C/Q$ assumptions, including the empty context.
Note that typing guarantees that if a term has free variables in $Q$ or $C$, it must not be in $U$ or $W$, since $U \ngeq Q$, $U \ngeq C$, $W \ngeq Q$, and $W \ngeq C$.

\begin{theorem}[Type Preservation of Circuit-Normalization Semantics]
  If \ $\tp{\Gamma_{WCQ}}{m}{M}{A}$ and $\cst{M}{M'}$, then $\tp{\Gamma_{WCQ}}{m}{M'}{A}$.
\end{theorem}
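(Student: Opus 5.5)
We argue by induction on the derivation of $\cst{M}{M'}$, inverting the typing derivation of $M$ in each case. The statement is generalized to contexts $\Gamma_{WCQ}$ precisely so that the induction goes through the reductions under binders (functions in $Q$ and $C$, $\mathsf{fix}$, neutral matches, neutral conditionals). When we extend the context with the bound variable, the new assumption has mode $Q$, $C$, or $W$. For $\mathsf{fix}$ the bound $f$ has type $\dwc\ucw C$, which is a $C$-moded assumption, so the extended context is again of the form $\Gamma_{WCQ}$. One caveat: the $fix$ rule requires a context of the form $\Gamma_W,\Gamma_U$. Combined with the $\Gamma_{WCQ}$ hypothesis, this means the context is purely $W$-moded, and that shape is preserved.

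\textbf{Congruence rules.} These are routine. We invert the typing rule for the outer construct, apply the induction hypothesis to the premise that reduces, and reapply the same rule. The mode side conditions ($\Gamma\ge n$) and the context splits $\Gamma_1\bowtie\Gamma_2$ are unchanged, because reduction does not alter the context.

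A subtlety arises when the reducing subterm is in a different mode from the whole term, for example under $\mathsf{down}$, $\mathsf{meas}$, or $\mathsf{letp}^Q_C$. There we need the subcontext $\Gamma_1$ to again have the form $\Gamma_{WCQ}$ for the induction hypothesis to apply. This holds because any subcontext of a $\Gamma_{WCQ}$ context is again of that form.

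For the congruence into $\down{n}{m}{M}$ with $m\in\{U,L\}$ and $n=C$, the premise $\Gamma\ge C$ together with $\Gamma_{WCQ}$ holds trivially. For $m\in\{U,L,W\}$, the typing side conditions force the context to contain only $W$ assumptions, or to be empty. This matches the remark preceding the theorem.

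\textbf{$\beta$ rules.} These cases use \Cref{lem:subst}.
\begin{itemize}
\item For $\mathsf{app}$ and $\mathsf{letp}$, invert to obtain $\Gamma_1,x{:}A\vdash M:B$ and $\Gamma_2\vdash V:A$, then substitute.
\item $\mathsf{letp}$ needs substitution applied twice, using the context merge.
\item For $\ldown{n}{m}{x}{\down{n}{m}{V}}{N}$, inverting $\downarrow I$ gives $V$ at mode $n$, and substituting into $N$ (which is at mode $n$) is again an instance of the lemma.
\item For $\force{m}{n}{\susp{m}{n}{M}}$, inverting $\uparrow I$ gives $\Gamma\vdash_m M:A_m$ directly.
\item The $\mathsf{letu}$ case is immediate, after noting that $\vdash\trivm{m}:1_m$ uses only unrestricted assumptions, which merge away.
\end{itemize}

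A point to check is that \Cref{lem:subst} is stated for a single mode $m$, whereas in the $\downarrow E$ case the substituted variable's mode $n$ matches the body's mode, so it applies as stated. The lemma must also handle substituting into terms of other modes that contain $x$. We assume it is stated generally enough, that is, with $x$ of any mode and $M$ of any mode. If it is not, we would first strengthen it by mutual induction over the typing rules.

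\textbf{Commuting conversions.} This is the main obstacle, because these rules restructure terms and hence context splits. Consider $\app{(\ltriv{R}{V_1}{m})}{V_2}\longrightarrow\ltriv{R}{(\app{V_1}{V_2})}{m}$. The typing on the left has context $(\Gamma_R\bowtie\Gamma_1)\bowtie\Gamma_2$. We must re-associate it to $\Gamma_R\bowtie(\Gamma_1\bowtie\Gamma_2)$, which requires associativity and commutativity of $\bowtie$ as an auxiliary lemma; unrestricted assumptions are shared, and linear ones are disjointly distributed.

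We must also check mode compatibility in the mixed case. Here $R$ is $Q$-moded but the match is in $C$ (rule $letu^Q_C$), while the surrounding eliminator is $C$-moded, so the result is again typable by $letu^Q_C$.

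Commutations through $\mathsf{if}$ duplicate the continuation into both branches. The typing of $\mathsf{if}$ shares $\Gamma_2$ across branches, and the merge lemma again produces exactly that shape.

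We would prove the merge-associativity lemma first and then dispatch each commuting rule of \Cref*{app:compile-commuting} uniformly by this re-association argument.
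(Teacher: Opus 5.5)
Your proposal is correct and follows the paper's route: induction on the step derivation, with inversion and the induction hypothesis for the congruences (the $Q$/$C$ binders and $\mathsf{fix}$ keep the context in $\Gamma_{WCQ}$ form), the substitution lemma for the $\beta$ cases, and your re-association argument via $\bowtie$ for the commuting conversions, which fills in cases the paper's sketch omits. Two minor fixes, neither affecting the argument: conversions that float out a $W$-moded $\mathsf{let}\;\mathsf{down}$ binder also need weakening of the remaining premises by that unrestricted variable, since $\bowtie$ requires unrestricted assumptions on both sides. And your aside about $m \in \{U, L, W\}$ is wrong for $L$, whose terms may carry $Q$/$C$ assumptions because $Q \geq L$ and $C \geq L$.
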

\begin{proof}
  By induction on the derivation of $\cst{M}{M'}$. We give several representative cases in \Cref*{app:proofs}.
\end{proof}
We next state a normal forms lemma that characterizes the shape of normal forms in each mode, which is crucial for proving progress.
Since let-style eliminators and conditionals on neutral terms can appear in front of normal terms of any type in $Q$ and $C$, we use several shorthand notations.
First, we use the following shorthand to append matches and conditionals to a grammar defining $V_Q$ and $V_C$ respectively.
In the definition of $\mathsf{let}_C$, $m$ ranges over $Q$ and $C$ and $n$ ranges over $Q$ and $W$.
{
\begin{align*}
  (\mathsf{let}_Q) &\Coloneqq \ltrivq{R_Q}{V_Q} \defor \lpairq{x}{y}{R_Q}{V_Q} \\
  (\mathsf{let}_C) &\Coloneqq \ltriv{R_m}{V_C}{m} \defor \lpair{x}{y}{R_m}{V_C}{m} \defor \ldown{n}{C}{x}{R_C}{V_C} \\
                   &\defor \ifc{R_C}{V_C}{V_C} \defor \force{C}{W}{R_W}
\end{align*}
}
We write $R_C'$ to denote terms generated by $R_C$ that does not use $\mathsf{measure}$.

\begin{lemma}[Normal Forms]
\label{lem:compile-norm}
If \ $\tp{\Gamma_{WCQ}}{m}{V_m}{A_m}$ and $V_m$ is normal, then $V_m$ has the following shape depending on $A_m$. 

\begin{itemize}
  \item If $A_m = 1_m$, then $V_m = \trivm{m}$ for $m \in \{U, L, W\}$.
  \item If $A_Q = 1_Q$, then $V_Q = \trivq \defor R_Q \defor (\mathsf{let}_Q)$.
  \item If $A_C = 1_C$, then $V_C = \trivc \defor R_C' \defor (\mathsf{let}_C)$.
  \item If $A_m = A_1 \otimes_m A_2$, then $V_m = \pair{V_m}{V_m}{m}$ for $m \in \{U, L, W\}$.
  \item If $A_Q = A_1 \otimes_Q A_2$, then $V_Q = \pairq{V_Q}{V_Q} \defor R_Q \defor (\mathsf{let}_Q)$.
  \item If $A_C = A_1 \otimes_C A_2$, then $V_C = \pairc{V_C}{V_C} \defor R_C' \defor (\mathsf{let}_C)$.
  \item If $A_m = A_1 \multimap_m A_2$, then $V_m = \lam{x}{N_m}$ for $m \in \{U, L, W\}$.
  \item If $A_Q = A_1 \multimap_Q A_2$, then $V_Q = \lam{x}{N_Q} \defor g \defor (\mathsf{let}_Q)$.
  \item If $A_C = A_1 \multimap_C A_2$, then $V_C = \lam{x}{N_C} \defor R_C' \defor (\mathsf{let}_C)$.
  \item If $A_m = \uparrow^m_n A_n$, then $V_m = \susp{n}{m}{M_n}$ for $m \in \{U, L\}$.
  \item If $A_m = \uparrow^W_C A_C$, then $V_W = \susp{C}{W}{M_C} \defor x$.
  \item If $A_m = \downarrow^n_m A_n$, then $V_m = \down{n}{m}{V_n}$ for $m \in \{U, L, W\}$.
  \item If $A_C = \downarrow^Q_C A_Q$, then $V_C = \down{Q}{C}{V_Q} \defor R_C' \defor (\mathsf{let}_C)$.
  \item If $A_C = \dwc A_{W}$, then $V_C = \down{W}{C}{V_{W}} \defor \mkred{R_C} \defor (\mathsf{let}_C)$.
  \item If $A_Q = \qubit$, then $V_Q = R_Q \defor (\mathsf{let}_Q)$.
  \item If $A_C = \cbit$, then $V_C = R_C' \defor (\mathsf{let}_C)$.
\end{itemize}
\end{lemma}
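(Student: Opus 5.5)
The plan is to argue by induction on the structure of the normal form $V_m$, following the grammar of \Cref{sec:circuit-normal-forms}, with an inner case analysis on the last typing rule used to derive $\tp{\Gamma_{WCQ}}{m}{V_m}{A_m}$. Neutral forms refer to each other: $\app{R_C}{K_C}$, $\meas{V_Q}$, and $\force{C}{W}{R_W}$ all contain smaller normal or neutral subterms. So I would first prove an auxiliary lemma about neutrals, by mutual induction on $R_Q$, $R_W$ and $R_C$. It states that a neutral $R_m$ typed in $\Gamma_{WCQ}$ is typeable only in modes $Q$, $W$, $C$, never in $U$ or $L$. It should also record which neutral shapes can inhabit which types. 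In particular, $\meas{V_Q}$ has type $\dwc\ucw\cbit$, and this fact is the source of the distinguished (red) $R_C$ case in the statement, as opposed to $R_C'$ everywhere else.

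For the $U/L$ modes, the key point is that the context is $\Gamma_{WCQ}$. A variable of mode $W$, $C$ or $Q$ cannot appear in a $U$- or $L$-moded term, since $U \ngeq Q,C$, $L \ngeq W$, etc. Because $V_U$ and $V_L$ are just canonicals $K_m$, each canonical constructor determines the top-level type former via inversion. The cases then reduce to matching constructors against the unit, tensor, function and shift type formers; the other combinations are ruled out immediately by inversion.

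For $W$, normal forms are $\susp{C}{W}{M_C}$ or a variable, and the only type is $\uparrow^W_C A_C$, which gives exactly the stated shapes. The vacuous $W$ cases in the bullets for $1$, $\otimes$ and $\multimap$ follow because no such types exist at mode $W$.

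For $Q$ and $C$, I would split on the outermost form.

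If $V$ is a $(\mathsf{let}_Q)$/$(\mathsf{let}_C)$ prefix (match, $\mathsf{let\;down}$, or $\mathsf{if}$ on a neutral), it falls into the $(\mathsf{let})$ alternative for every type. No induction is needed here beyond checking that it is well typed.

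If $V$ is canonical, inversion on its constructor fixes the type former. The first-order restriction on $Q$ functions means a $Q$-function is $\lam{x}{V_Q}$ at type $U\multimap_Q S$.

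If $V$ is neutral, I use the auxiliary lemma to place it. For $Q$:
\begin{itemize}
\item a variable $x$ can have any $Q$ type, so it is covered by $R_Q$;
\item a gate application $\app{g}{\cdot}$ has a simple type;
\item a bare gate constant $g$ is the extra alternative in the function case.
\end{itemize}
There is no $Q$ constructor at type $\qubit$, so only $R_Q$ and $(\mathsf{let}_Q)$ remain there.

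For $C$, I would show the following by inspecting the typing of $\meas{V_Q}$ and of applications/variables:
\begin{itemize}
\item a neutral that does not use measurement at the head is $R_C'$;
\item measurement at the head forces the type $\dwc\ucw\cbit$, giving the red case;
\item $\app{R_C}{K_C}$ inherits its type from the codomain of the head, so measurement cannot appear at its head.
\end{itemize}

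The main obstacle will be getting the definition of $R_C'$ to line up with the typing. Roughly, this is showing that measurement-headed neutrals inhabit only $\dwc\ucw\cbit$. A secondary obstacle is checking that the $\Gamma_{WCQ}$ restriction really excludes $U/L$ variables in every mode. I would handle the first by strengthening the auxiliary neutral lemma to say: if $R_C$ contains $\meas{}$ in head position, then its type is $\dwc\ucw\cbit$. This is proved by induction on $R_C$, using that the application case requires the head to have function type, which $\dwc\ucw\cbit$ is not. Once this strengthened lemma is in hand, every remaining case of the lemma is routine inversion.
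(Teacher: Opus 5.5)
Your strategy is the natural one: case analysis on the normal-form grammar, inversion on the typing derivation, and the observation that a $\mathsf{measure}$-headed neutral can only be $\meas{V_Q}$ itself, at type $\dwc \ucw \cbit$. You are also right that $R_C'$ must be read as ``not headed by $\mathsf{measure}$''. Under the literal reading, $\app{f}{(\meas{q})}$ with $f : \dwc \ucw \cbit \multimap_C 1_C$ and $q : \qubit$ already refutes the $1_C$ bullet.

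The step that fails is ``if $V$ is canonical, inversion on its constructor fixes the type former''. This is false for $\fixc{x}{V_C}$, which is one of the canonical forms $K_C$ and whose rule $\mathit{fix}$ gives it an arbitrary $C$-type. Take $F = \fixc{f}{\ldown{W}{C}{y}{f}{\force{C}{W}{y}}}$. It is normal: a $\mathsf{let\;down}$ on the neutral $f$ with neutral body $\force{C}{W}{y}$. It is typeable in the empty context at every $C$-type, in particular at $1_C$ and $\cbit$, yet it has none of the listed shapes. So the statement as printed cannot be proved by any argument; every $C$ bullet needs an extra alternative $\fixc{x}{V_C}$. This is not cosmetic. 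At type $1_C \multimap_C 1_C$, the closed term $\app{F}{\trivc}$ is neither normal nor reducible, so the progress proof that uses this lemma has a matching hole. Terms of this kind arise as soon as the paper's $V_3$ program is applied to an argument.

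There is a second mismatch in your $Q$ case. You say a variable ``can have any $Q$ type, so it is covered by $R_Q$''. But the $\multimap_Q$ bullet lists only $\lam{x}{N_Q}$, $g$ and $(\mathsf{let}_Q)$, not $R_Q$. If $\Gamma_{WCQ}$ contains $x : \qubit \multimap_Q \qubit$, the normal form $x$ refutes that bullet. You need the extra invariant that every $Q$-assumption in $\Gamma_{WCQ}$ has simple type. This costs nothing, because going under a binder only ever adds such assumptions:
\begin{itemize}
\item $Q$-abstractions bind some $U$;
\item matches on $U \otimes_Q S$ bind simple components;
\item $\mathsf{let\;down}$ on $\dqc U$ binds $U$.
\end{itemize}
With this invariant, your remark that $\app{g}{K_Q}$ and $\app{g}{R_Q}$ have simple type closes the case.

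Finally, your justification for the $U/L$ case is wrong, though the error does not affect the argument. Because $Q \geq L$, $C \geq L$ and $W \geq C$, assumptions of modes $Q$, $C$ and $W$ can occur inside $L$-moded terms, e.g.\ $\susp{Q}{L}{x}$ with $x : \qubit$. What you actually use is only that a normal $V_L$ is syntactically some $K_L$. Inversion on that constructor needs no restriction on the context.
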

Note the sole use of $\mkred{R_C}$ as opposed to $R_C'$, since $\dwc A_{W}$ is the only type in $C$ that can be generated by $\meas{V_Q}$, which has type $\dwc \ucw \cbit$.

\begin{theorem}[Progress of Circuit-Generation Semantics]
\label{thm:compile-progress}
If \ $\tp{\Gamma_{WCQ}}{m}{M}{A}$, then either $M$ is a value or there exists some $M'$ such that $\cst{M}{M'}$.
\end{theorem}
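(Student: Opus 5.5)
The proof goes by induction on the typing derivation of $\tp{\Gamma_{WCQ}}{m}{M}{A}$, generalized over the context $\Gamma_{WCQ}$. Generalizing matters because the circuit-normalization semantics reduces under $Q$- and $C$-moded binders. There, the premises are typed in an extended context $\Gamma_{WCQ}, x : A_Q$ (or $x : A_C$, or $f : \dwc\ucw C$), which is still of the form $\Gamma_{WCQ}$. Premises in modes $U$ or $L$ can only carry $W/C/Q$ assumptions if they are empty of them, by the remark on mode ordering: typing forbids $Q$/$C$ variables in $U$ and $W$, and the $\Gamma \geq n$ side conditions restrict $L$. So every inductive hypothesis applies.

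For each rule I would apply the inductive hypothesis to the principal (leftmost evaluated) premise and split into two cases.
\begin{itemize}
\item If that premise steps, the matching congruence rule gives a step for $M$.
\item If it is a value, I invoke the Normal Forms Lemma (\Cref{lem:compile-norm}) at its type and case on the shapes it allows.
\end{itemize}
The value shapes lead to the following outcomes.
\begin{itemize}
\item A canonical form fires the corresponding $\beta$-rule: $\lambda$ applied to a value, $\mathsf{force}$ of $\mathsf{susp}$, $\mathsf{let}$ of a unit, pair or $\mathsf{down}$.
\item A neutral $R$ leaves us in a $Q$/$C$ position. Either the whole term is a neutral or normal form (e.g.\ $\app{R_C}{K_C}$, $\app{g}{K_Q}$, $\ltriv{R}{V}{m}$), or a neutral congruence rule applies to a later subterm. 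In the latter case I appeal to the inductive hypothesis on that subterm: the body of a neutral match, the branches of a conditional, the argument of an application, or the body of a $Q$/$C$ $\lambda$, $\mathsf{fix}$, or $\mathsf{measure}$.
\item A $(\mathsf{let}_Q)$ or $(\mathsf{let}_C)$ prefix means the eliminator is blocked by a chain of neutral matches, and the appropriate commuting conversion (from \Cref*{app:compile-commuting}) fires.
\end{itemize}
For $U$/$L$ terms the mode restriction rules out free variables entirely, so the Normal Forms Lemma yields only canonicals and the argument is the standard closed-term progress proof. Variables are immediately neutral in $Q$, $C$ and $W$. Gate constants are values, and their application to a value argument is neutral or commutes.

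The main obstacle is the bookkeeping in $Q$ and $C$: checking that every combination of value shapes for the subterms of an eliminator is either itself in the normal-form grammar or is covered by a reduction rule. Two cases need care. First, the value grammar for $V_C$ allows $\ldown{n}{C}{x}{R_n}{V_C}$ only for $n \in \{Q,W\}$. Second, $\meas{V_Q}$ is neutral, so a $\mathsf{let\,down}$ in $C$ whose scrutinee is a measurement is normal rather than stuck; this is exactly why the Normal Forms Lemma distinguishes $R_C$ from $R_C'$ at type $\dwc A_W$. I would check the commuting conversions systematically by pairing each eliminator (application head/argument, $\mathsf{let}$ on unit/pair/$\mathsf{down}$, $\mathsf{if}$, $\mathsf{measure}$, $\mathsf{force}$) with each $(\mathsf{let}_m)$ prefix form, consulting the appendix listing. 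For $\mathsf{measure}$ specifically, no commutation is needed since its body may be any normal $V_Q$.
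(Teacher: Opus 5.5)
Your proposal follows the paper's proof: induction on the typing derivation, the inductive hypothesis on the principal subterm, and a case split via the Normal Forms Lemma into a congruence step, a $\beta$-step, an already-normal (neutral) term, or a commuting conversion, as in the paper's worked application case. Two small caveats: first, your claim that $L$-moded premises carry no $W/C/Q$ assumptions is false, since $W$, $C$ and $Q$ are all $\geq L$ (e.g.\ $\tp{x:\qubit}{L}{\susp{Q}{L}{x}}{{\uql}\qubit}$ as a subterm of $\force{Q}{L}{\susp{Q}{L}{x}}$), though nothing breaks because the inductive hypothesis is only invoked at $W/C/Q$ contexts, where no $U$/$L$-moded variable can block an $L$-eliminator; second, and equally true of the paper's proof, your step ``a canonical form fires the corresponding $\beta$-rule'' misses $\fixc{f}{V_C}$, which is canonical at every $C$-type yet has no $\beta$-rule and is absent from the Normal Forms Lemma, so applying a $\mathsf{fix}$ to an argument is neither normal nor reducible as the rules are written.
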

\begin{proof}
  By induction on the derivation of $\tp{\Gamma_{WCQ}}{m}{M}{A}$. We refer readers to \Cref*{app:proofs} for a sketch of the proof or
  the mechanized proof in our artifact.
\end{proof}
Finally, we obtain our desired property that the circuit-normalization semantics reduces $C$-code in a way such that any embedded $Q$-code do not contain any $U$ or $L$ subterms.
\begin{corollary}[Compilation Pre-Generates Circuits]
\label{cor:compilation-normalizes}
  If $\ltp{\cdot}{M}{{\dcl} A_C}$, then $M$ is some $\downcl{V}$ such that $V$ does not contain any subterms in modes $U$ or $L$ or there exists some $M'$ such that $\cst{M}{M'}$.
\end{corollary}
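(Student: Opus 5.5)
The plan is to combine Progress (\Cref{thm:compile-progress}) with the Normal Forms lemma (\Cref{lem:compile-norm}), and then add a syntactic induction showing that normal forms in modes $Q$, $C$, $W$ contain no $U$ or $L$ subterms.

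First, apply \Cref{thm:compile-progress} to $\ltp{\cdot}{M}{{\dcl} A_C}$; the empty context is a legitimate $\Gamma_{WCQ}$. Either $M$ steps, and we are done, or $M$ is a value $V_L$. In the latter case, \Cref{lem:compile-norm} (the $\downarrow^n_m$ case with $m = L$) tells us that $M = \downcl{V_C}$ with $V_C$ a normal form. Inverting ${\downarrow}I$ then gives $\ctp{\cdot}{V_C}{A_C}$. The remaining work is to show that $V_C$ contains no $U$ or $L$ subterms.

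For this I would prove an auxiliary claim: \emph{every well-typed normal form $V_m$ with $m \in \{Q, C, W\}$, together with every neutral $R_Q$, $R_C$, $R_W$ and canonical $K_Q$, $K_C$, $K_W$, has only subterms of modes $Q$, $C$, $W$.} The proof is by mutual induction on the grammar of $V$, $K$ and $R$ in these modes. Most cases are immediate, since every immediate subterm is again a $Q$/$C$/$W$ normal, canonical or neutral form. This covers $\lam{x}{V_Q}$, $\lam{x}{V_C}$, $\fixc{x}{V_C}$, pairs, $\down{Q}{C}{V_Q}$, $\down{W}{C}{V_W}$, the $\mathsf{let}$ chains on neutrals, $\meas{V_Q}$, $\app{g}{K_Q}$, and $\force{C}{W}{R_W}$. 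Gate constants $g$ are $Q$-moded atoms. The grammar contains no $\mathsf{force}^L_Q$ or $\mathsf{force}^U_C$ forms, so a normal $Q$ or $C$ term cannot refer back to $U$ or $L$ code; this is exactly what the design of the grammar guarantees.

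I expect the main obstacle to be $K_W ::= \susp{C}{W}{M_C}$, because its body $M_C$ is an \emph{arbitrary} term rather than a normal form. I would handle it using typing together with the mode ordering. The body $M_C$ lives under the context of its suspension, which by $\Gamma \geq W$ contains only $W$ assumptions. A subterm of mode $U$ or $L$ could only arise inside $M_C$ through a shift that lowers from $L$ into $C$ or $Q$, namely $\mathsf{force}^L_Q$ or $\mathsf{let}\,\mathsf{down}$ on $L$. However, $L$ is not a mode reachable from $C$ by any syntactically allowed shift ($C$ only has $\downarrow^Q_C$, $\downarrow^W_C$, and ${\uparrow}^W_C$ eliminations; $\dcl$ goes from $C$ \emph{into} $L$, not out of it). So a well-typed $C$-term never has $U$/$L$ subterms at all, regardless of normality.

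If this typing argument goes through, it may in fact show more directly that every well-typed $C$-term under a $\Gamma_{WCQ}$ context is free of $U$/$L$ subterms. The same would hold for $Q$ terms except for $\mathsf{force}^L_Q$, which is exactly the case that normalization eliminates. I would therefore state the auxiliary lemma by induction on typing derivations for $C$ and $W$, and by the normal-form grammar for $Q$, where the $V_Q$ grammar excludes $\mathsf{force}$. The $C$ embedding $\down{Q}{C}{\cdot}$ then carries normal $Q$ terms, by the $K_C$ grammar. Finally, I would double-check that the typing rules really forbid $\mathsf{force}$ of an $L$-term inside $C$ mode.
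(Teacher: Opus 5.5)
\textbf{The gap is in the $K_W$ case, and your typing lemma for it is false.} Your skeleton matches the argument the paper has in mind: the paper states this corollary without proof, as a consequence of progress and the Normal Forms lemma. You apply progress, use the ${\downarrow}$ clause of the Normal Forms lemma to get $M = \downcl{V_C}$ with $V_C$ normal, and then induct over the $Q$/$C$/$W$ normal-form grammar. You are right that the one case the grammar does not handle is $K_W = \susp{C}{W}{M_C}$, whose body is arbitrary.

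Your claim that a well-typed $C$-term has no $U$/$L$ subterms fails. $C$ embeds $Q$-terms through $\down{Q}{C}{\cdot}$, through $\mathsf{measure}$, and through the $\mathsf{let}$-matches on $Q$. An unnormalized $Q$-term may contain $\mathsf{force}$ of an $L$-term. For instance, $\ctp{\cdot}{\down{Q}{C}{\force{Q}{L}{\susp{Q}{L}{\app{\gsty{init}}{\trivq}}}}}{{\dqc}\,\qubit}$ holds; this is the paper's own \texttt{down (force bell\_pair)} pattern. Separately, under the ${\downarrow}E$ rule as printed, a $C$-term can scrutinize an $L$-term of type ${\dcl}A_C$ directly, which contradicts your remark that $\dcl$ only goes from $C$ into $L$. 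Your fallback, that $\down{Q}{C}{\cdot}$ carries normal $Q$-terms ``by the $K_C$ grammar,'' applies only to normal $C$-terms. The body of a $W$-suspension is never normalized, because no rule reduces under $\mathsf{susp}$. So an induction on typing for $C$ cannot borrow the $Q$ normal-form grammar there.

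\textbf{The statement itself fails as written.} Consider the closed term $\downcl{\down{W}{C}{\susp{C}{W}{\down{Q}{C}{\force{Q}{L}{\susp{Q}{L}{\app{\gsty{init}}{\trivq}}}}}}}$.
\begin{itemize}
\item It has type ${\dcl}({\dwc}\,{\ucw}\,{\dqc}\,\qubit)$.
\item It is normal: it is the canonical form $\downcl{K_C}$ with $K_C = \down{W}{C}{K_W}$.
\item It has no reduct.
\item It contains the $L$-subterm $\susp{Q}{L}{\app{\gsty{init}}{\trivq}}$.
\end{itemize}
So it is a counterexample to the corollary as literally stated, and the paper's implicit ``inspect the grammar'' argument misses the same case.

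\textbf{Two ways to obtain a true statement.}
\begin{itemize}
\item[(a)] Add a congruence rule under $\mathsf{susp}^W_C$ and take $K_W$ to be $\susp{C}{W}{V_C}$. Progress survives, because the body is typed in a $W$-only context, which is a $\Gamma_{WCQ}$. Your grammar induction then goes through with no typing argument at all.
\item[(b)] Weaken the conclusion to ``no $U$/$L$ subterms outside the bodies of $\mathsf{susp}^W_C$.'' This is exactly what your grammar induction already proves.
\end{itemize}
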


\subsection{Runtime Semantics}
\label{sec:runtime-semantics}
The runtime semantics captures the execution of normalized (from the circuit-normalization semantics) $C$-moded programs and their interactions with a quantum co-processor.
We therefore do not define any reduction rules for $U/L$ terms, since there are no $U/L$-moded subterms in a normalized $C$ term.

While circuit-normalization rewrote open terms to support normalizing circuits under binders, the runtime semantics is a deterministic call-by-value semantics with qubit references that operates on \emph{closed} terms.
Because quantum computation is inherently stateful, we pair these terms with a quantum runtime state \(\rho\) and a list \(\Sigma\) of qubit identifiers \(p, q, r, \ldots\) that uniquely identify qubits in \(\rho\).
Our treatment of quantum state is abstract, but \(\rho\) could be instantiated by a partial density matrix for simulation purposes.
Qubit identifiers are structural symbols that can be renamed, but they do not admit substitution.
They are embedded in terms using references \(\qref{q}\), which have type qubit.
These data form the configurations \(\config{\rho}{\Sigma}{M}\) rewritten by our small-step semantics.

\subsubsection{Normal forms}
Normal forms are significantly simpler at runtime.
Indeed, there are no neutral terms at runtime because we only execute closed programs, and gates now have computational meaning.
As a result, normal forms \(V_m\) at mode \(m\) correspond to typical functional values, where we add
the constants $0$ and $1$ to denote classical bits.
A configuration is normal if its term is normal:
\begin{align*}
  V_W &\Coloneqq \susp{C}{W}{M_C} \qquad
  V_C \Coloneqq 0 \defor 1 \defor \trivm{C} \defor \pairm{V}{V'}{C} \defor \lam{x}{M_C} \defor \down{Q}{C}{V_Q} \defor \down{W}{C}{V_W}\\
  V_Q &\Coloneqq \qref{q} \defor \trivm{Q} \defor \pairm{V}{V'}{Q} \qquad
  V_{\mathit{cfg}} \Coloneqq \config{\rho}{\Sigma}{V_m} \qquad (m \in \{\,Q, C, W\,\})
\end{align*}

\subsubsection{Small-step Semantics}
The runtime semantics for the \(C\) language has a standard call-by-value semantics with additional rules that
interact with the quantum co-processor, modeled by the runtime semantics for $Q$.
Measurement is operationally captured using a meta-level abstraction \(\MEAS{\rho}{q} = (\rho', b)\) that measures a qubit \(q\) in a state \(\rho\).
Measurement produces an updated state \(\rho'\) that does not refer to \(q\), along with the measured bit \(b \in \{\, 0, 1 \,\}\).
In the first rule, we capture the fact that measuring \(q\) collapses it by deleting it from the state \(\Sigma\).
\begin{gather*}
  \infer{
    \rst{
      \config{\rho}{\Sigma, q}{\meas{(\qref{q})}}
    }{
      \config{\rho'}{\Sigma}{b}
    }
  }{
    \MEAS{\rho}{q} = (\rho', b)
  }
  \qquad
  \infer{
    \rst{
      \config{\rho}{\Sigma}{\meas{M}}
    }{
      \config{\rho'}{\Sigma'}{\meas{M'}}
    }
  }{
    \rst{
      \config{\rho}{\Sigma}{M}
    }{
      \config{\rho'}{\Sigma'}{M'}
    }
  }
\end{gather*}
Bits are eliminated at runtime using conditionals
\begin{gather*}
  {
    \rst{
      \config{\rho}{\Sigma}{\ifc{1}{M_1}{M_0}}
    }{
      \config{\rho}{\Sigma}{M_1}
    }
  }
  \qquad
  {
    \rst{
      \config{\rho}{\Sigma}{\ifc{0}{M_1}{M_0}}
      }{
        \config{\rho}{\Sigma}{M_0}
      }
    }{
    }
    \\
    \infer{
      \rst{
        \config{\rho}{\Sigma}{\ifc{N}{M_1}{M_0}}
        }{
          \config{\rho'}{\Sigma'}{\ifc{N'}{M_1}{M_0}}
          }
        }{
          \rst{
            \config{\rho}{\Sigma}{N}
          }{
            \config{\rho'}{\Sigma'}{N'}
          }
        }
      \end{gather*}
And finally, the interaction between the layers is given by shift elimination:
\begin{gather*}
  \rst{
    \config{\rho}{\Sigma}{\ldown{m}{C}{x}{\down{m}{C}{V_m}}{N}}
  }{
    \config{\rho}{\Sigma}{\subst{V_m}{x}{N}}
  }
  \tag{
    $m \in \{Q, W\}$
  }
  \\
  {
    \rst{
      \config{\rho}{\Sigma}{\force{C}{W}{\susp{C}{W}{M}}}
    }{
      \config{\rho}{\Sigma}{M}
    }
  }
\end{gather*}
The remaining rules are standard.

The runtime semantics for the \(Q\)-layer executes circuits, which describe sequences of gate applications.
Gate applications are performed by the meta-level abstraction \(\gapp{g}{\rho}{V_Q} = (\rho'; V_Q')\).
It applies the gate \(g\) to the tuple of qubits in the state \(\rho\) identified by \(V_Q\); we know by a runtime normal forms lemma (analogous to \Cref{lem:compile-norm}, which we omit here) that \(V_Q\) must be a tensor of qubit references.
The result is an updated quantum state \(\rho'\), and a tuple \(V_Q'\) of references to globally fresh qubit identifiers identifying \(g\)'s outputs in \(\rho'\).
We write \(\qforget{V_Q}\) for the set of qubit identifiers appearing in \(V_Q\):
\[
  \qforget{\trivq} = {\cdot} \qquad \qforget{\qref{q}} = q \qquad \qforget{\pair{V}{V'}{Q}} = \qforget{V}, \qforget{V'}
\]
Gate application is then given by:
\begin{gather*}
  \infer{
    \rst{
      \config{\rho}{\Sigma, \qforget{V_Q}}{\app{g}{V_Q}}
    }{
      \config{\rho}{\Sigma, \qforget{V_Q'}}{V_Q'}
    }
  }{
    \gapp{g}{\rho}{V_Q} = (\rho'; V_Q')
  }
\end{gather*}
As a special case, qubit initialization is given by the \(\gsty{init} : 1 \multimap \qubit\) gate.
Since \(\mathsf{apply}\) is assumed to generate globally fresh identifiers, there is no risk of collision between qubit identifiers:
\begin{gather*}
  \infer{
    \rst{
      \config{\rho}{\Sigma}{\app{\gsty{init}}{\trivq}}
    }{
      \config{\rho'}{\Sigma, q}{\qref{q}}
    }
  }{
    \gapp{\gsty{init}}{\rho}{\cdot} = (\rho'; \qref{q})
  }
\end{gather*}
The remaining computation rules are standard for deterministic call-by-value semantics:
\begin{gather*}
  \infer{
    \rst{
      \config{\rho}{\Sigma}{\app{M}{N}}
    }{
      \config{\rho'}{\Sigma'}{\app{M'}{N}}
    }
  }{
    \rst{
      \config{\rho}{\Sigma}{M}
    }{
      \config{\rho'}{\Sigma'}{M'}
    }
  }
  \quad
  \infer{
    \rst{
      \config{\rho}{\Sigma}{\app{V}{N}}
    }{
      \config{\rho'}{\Sigma'}{\app{V}{N'}}
    }
  }{
    \rst{
      \config{\rho}{\Sigma}{N}
    }{
      \config{\rho'}{\Sigma'}{N'}
    }
  }
  \\
  {
  \small
  {
    \rst{
      \config{\rho}{\Sigma}{\ltriv{\trivq}{M}{m}}
    }{
      \config{\rho}{\Sigma}{M}
    }
  }
  \quad
  \infer{
    \rst{
      \config{\rho}{\Sigma}{\ltriv{M}{N}{m}}
    }{
      \config{\rho'}{\Sigma'}{\ltriv{M'}{N}{m}}
    }
  }{
    \rst{
      \config{\rho}{\Sigma}{M}
    }{
      \config{\rho'}{\Sigma'}{M'}
    }
  }
}
  \\
  {
    \rst{
      \config{\rho}{\Sigma}{\lpair{x}{x'}{\pair{V}{V'}{m}}{M}{m}}
    }{
      \config{\rho}{\Sigma}{\subst{V, V'}{x, x'}{M}}
    }
  }
  \\
  \infer{
    \rst{
      \config{\rho}{\Sigma}{\lpair{x}{x'}{M}{N}{m}}
    }{
      \config{\rho'}{\Sigma'}{\lpair{x}{x'}{M'}{N}{m}}
    }
  }{
    \rst{
      \config{\rho}{\Sigma}{M}
    }{
      \config{\rho'}{\Sigma'}{M'}
    }
  }
\end{gather*}

\subsubsection{Safety properties}

We prove runtime type safety for configurations.
To type configurations, we begin by making our hypothetical typing judgment \(\tp{\Delta}{m}{M}{A}\) \textit{parametric} to support qubit references.
The resulting judgment, \(\ptp{\Sigma}{\Delta}{m}{M}{A}\), means that the program \(M\) is well-typed in the presence of qubit identifiers \(\Sigma\).
It is inductively defined by extending the typing judgments (for $W, C, $ and $Q$ modes) given in \Cref{sec:statics} with a linear parameter list \(\Sigma\) and the following new rule typing references:
\[
  \infer{
    \ptp{q}{\cdot}{Q}{\qref{q}}{\qubit}
  }{
  }
\]
The remaining rules are extended analogously, for instance:
\begin{gather*}
  \infer{
    \ptp{\Sigma_1, \Sigma_2}{\Delta_1, \Delta_2}{Q}{\pairq{M}{N}}{A_1 \otimes A_2}
  }{
    \ptp{\Sigma_1}{\Delta_1}{Q}{M}{A_1}
    &
    \ptp{\Sigma_2}{\Delta_2}{Q}{N}{A_2}
  }
  \quad
  \infer{
    \ptp{\Sigma}{\Delta}{C}{\lam{x}{M}}{A \multimap B}
  }{
    \ptp{\Sigma}{\Delta, x : A}{C}{M}{B}
  }
\end{gather*}

To state and prove our runtime safety properties in the presence of abstract meta-operations, we must specify when an abstract quantum state \(\rho\) is consistent with a set of identifiers \(\Sigma\).
They are consistent, written \(\consistent{\rho}{\Sigma}\), if:
\begin{itemize}
\item for all \(q \in \Sigma\), \(\MEAS{\rho}{q}\) is defined;
\item for all \(\Sigma' \subseteq \Sigma\) and gates \(g : \qarr{U}{S}\), if \(\ptp{\Sigma'}{\cdot}{Q}{V_Q}{U}\), then \(\gapp{g}{\rho}{V_Q} = (\rho'; V_Q')\) is defined and \(\ptp{\qforget{V_Q'}}{\cdot}{Q}{V_Q'}{S}\).
\end{itemize}
Intuitively, these criteria ensure that measurement and gate application are always possible and type-preserving at runtime.

A configuration is well-typed whenever its term is well-typed relative to the set of current identifiers, and its state and identifiers are consistent.
\[
  \infer{
    \cfgtp{\Delta}{\config{\rho}{\Sigma}{M}}{A_m}
  }{
    \ptp{\Sigma}{\Delta}{m}{M}{A_m}
    &
    \consistent{\rho}{\Sigma}
  }
\]

\begin{theorem}[Runtime Preservation]
  \label{prop:dapq:1}
  If \(\cfgtp{\cdot}{\config{\rho}{\Sigma}{M}}{A}\) and \(\rst{\config{\rho}{\Sigma}{M}}{\config{\rho}{\Sigma'}{M'}}\), then \(\cfgtp{\cdot}{\config{\rho}{\Sigma'}{M'}}{A}\).
\end{theorem}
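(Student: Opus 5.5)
We prove the statement by induction on the derivation of the step $\rst{\config{\rho}{\Sigma}{M}}{\config{\rho'}{\Sigma'}{M'}}$. We read the theorem with the state allowed to change to some $\rho'$, since measurement and gate application update it. Inverting the configuration typing gives $\ptp{\Sigma}{\cdot}{m}{M}{A}$ and $\consistent{\rho}{\Sigma}$. We must produce both $\ptp{\Sigma'}{\cdot}{m}{M'}{A}$ and $\consistent{\rho'}{\Sigma'}$.

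Two auxiliary facts come first. The first is a parametric substitution lemma extending \Cref{lem:subst}: if $\ptp{\Sigma_1}{\Delta_1, x:A}{m}{N}{B}$ and $\ptp{\Sigma_2}{\Delta_2}{n}{V}{A}$, with the mode condition satisfied, then $\ptp{\Sigma_1,\Sigma_2}{\Delta_1 \bowtie \Delta_2}{m}{[V/x]N}{B}$. It is proved by induction on the first derivation, exactly as in \Cref{lem:subst}, with parameter lists split linearly alongside $\Delta$. The second is a consistency-preservation fact for the meta-operations:
\begin{itemize}
\item if $\MEAS{\rho}{q} = (\rho',b)$ and $\consistent{\rho}{\Sigma,q}$, then $\consistent{\rho'}{\Sigma}$;
\item if $\gapp{g}{\rho}{V}=(\rho';V')$ with $\qforget{V}$ drawn from the current identifiers, then $\rho'$ is consistent with $\Sigma,\qforget{V'}$.
\end{itemize}
The stated definition of consistency does not literally give this. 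I expect to adopt it as an assumption on the abstract $\rho$, namely that the meta-operations preserve consistency. This is the point where the proof relies on the abstraction rather than on calculation, so it should be stated as a well-formedness requirement on $\mathsf{meas}$ and $\mathsf{apply}$.

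With these in hand, the cases are as follows.
\begin{itemize}
\item \textbf{Congruence rules} (application, let, if, measure, letdown heads). Invert the typing to split $\Sigma = \Sigma_1,\Sigma_2$, with the reducing subterm typed under $\Sigma_1$. Apply the induction hypothesis to obtain the new subterm typed under $\Sigma_1'$ and a consistent new state. Reassemble with $\Sigma_1',\Sigma_2$. This needs the fact that the step only consumes or produces identifiers belonging to the subterm, so that $\Sigma' = \Sigma_1',\Sigma_2$ with fresh identifiers disjoint from $\Sigma_2$; global freshness of $\mathsf{apply}$ provides this.
\item \textbf{$\beta$, let-pair, let-unit, letdown and force/susp rules.} These follow from inversion plus the substitution lemma, with $\rho$ and $\Sigma$ unchanged.
\item \textbf{If rules.} These discard one branch. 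Since both branches are typed in the same $\Gamma_2$ and $\Sigma_2$, and the closed bit $0$ or $1$ is typed in the empty parameter list, this is immediate.
\item \textbf{Measurement.} Inversion gives $\Sigma = q$ for the reference, so $\Sigma' = \cdot$. The bit $b$ must have the type $\dwc\ucw\cbit$. I expect the intended reduct is really $\down{W}{C}{\susp{C}{W}{b}}$, or that $b$ is treated as a value of that type. I would fix this by reading the rule as producing $\down{W}{C}{\susp{C}{W}{b}}$, which types under the empty parameter list.
\item \textbf{Gate application.} Inversion gives $g : U \multimap S$ and $\ptp{\qforget{V}}{\cdot}{Q}{V}{U}$; a runtime canonical-forms lemma is needed here to know $V$ consists of references and units. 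The second clause of consistency, instantiated at the subset $\qforget{V} \subseteq \Sigma$, gives $\ptp{\qforget{V'}}{\cdot}{Q}{V'}{S}$, and consistency of the new state comes from the auxiliary fact. The $\gsty{init}$ case is identical with $U = 1_Q$.
\end{itemize}

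The main obstacle is the bookkeeping of $\Sigma$ across congruence rules combined with the consistency side condition: showing that a step within a subterm changes only that subterm's identifiers, and that consistency with the whole of $\Sigma$ is maintained. I would handle this by strengthening the induction hypothesis: for any frame identifiers $\Sigma_0$ disjoint from the fresh ones, a step of $\config{\rho}{\Sigma_0,\Sigma_1}{M}$ with $M$ typed under $\Sigma_1$ yields a result of the form $\config{\rho'}{\Sigma_0,\Sigma_1'}{M'}$. The top-level theorem is then the instance $\Sigma_0 = \cdot$.
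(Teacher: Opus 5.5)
Your proposal is correct and follows the route the paper implicitly intends. The paper omits this proof, calling it simpler than the circuit-normalization case, and that case is done exactly as you do it: induction on the step derivation plus a substitution lemma. The repairs you flag are genuinely needed given the paper's definitions (the framed induction hypothesis, because congruence premises carry the whole $\Sigma$; the cross-mode substitution lemma; consistency being preserved by $\mathsf{meas}$ and $\mathsf{apply}$; retyping the measurement reduct as $\down{W}{C}{\susp{C}{W}{b}}$), and you should also state explicitly that the parametric rules for $\susp{C}{W}{M}$ and $\mathsf{fix}$ demand an empty parameter list, since your substitution lemma for unrestricted $x$ and the fix-unrolling case you do not list both depend on it.
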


\begin{theorem}[Runtime Progress]
  \label{prop:dapq:2}
  If \(\cfgtp{\cdot}{\config{\rho}{\Sigma}{M}}{A}\), then \(\rst{\config{\rho}{\Sigma}{M}}{\config{\rho}{\Sigma'}{M'}}\) for some \(\Sigma'\) and \(M'\), or \(M\) is a value.
\end{theorem}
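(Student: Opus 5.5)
The plan is to prove Runtime Progress by induction on the parametric typing derivation \(\ptp{\Sigma}{\cdot}{m}{M}{A}\), generalized over all three runtime modes \(m \in \{Q, C, W\}\), with the consistency premise \(\consistent{\rho}{\Sigma}\) threaded through. Because the context \(\Delta\) is empty, the variable rule cannot apply. For a subterm the induction hypothesis needs consistency with respect to the subterm's own identifier list \(\Sigma_i\), where \(\Sigma = \Sigma_1, \Sigma_2\). Consistency is downward closed: measurement at each \(q \in \Sigma_i\) is defined, and the gate clause already quantifies over subsets \(\Sigma' \subseteq \Sigma\). So the invariant passes to subderivations, and a step of a subterm can be lifted back to a step of the whole configuration. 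This lifting uses the congruence rules, which thread \(\rho,\Sigma\) through. It also needs a small framing lemma: if \(\config{\rho}{\Sigma_1}{M}\) steps, then so does \(\config{\rho}{\Sigma_1,\Sigma_2}{M}\), obtained by checking that each base rule only inspects the identifiers it consumes.

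Next I would state and prove a runtime canonical-forms lemma for closed values, analogous to \Cref{lem:compile-norm} but much simpler, since the runtime has no neutrals. The cases are as follows. A value of type \(\cbit\) is \(0\) or \(1\). A value of type \(\qubit\) is some \(\qref{q}\) whose parameter list is exactly \(q\). Values of type \(1_m\) and \(\otimes_m\) are units and pairs. A value of type \(\multimap_C\) is a \(\lambda\)-abstraction. A value of type \(\downarrow\) is a \(\mathsf{down}\). A value of type \(\ucw\) is a \(\mathsf{susp}\). The one subtlety is \(\multimap_Q\). At runtime, closed \(Q\) terms of function type occur only as gate constants \(g\) (or \(\lambda\)s) in head position, so the lemma must also cover a gate constant as the function part of an application.

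The main case analysis goes over the last typing rule. For introduction forms, the induction hypothesis is applied left to right, and either a congruence step fires or the whole term is a value. For eliminators, the induction hypothesis is applied to the principal subterm. If it steps, a congruence rule fires. If it is a value, canonical forms expose the matching constructor and the corresponding \(\beta\)-rule fires: \(\mathsf{if}\) on \(0/1\), \(\mathsf{let}\) on \(\trivm{m}\) or a pair, \(\mathsf{let\,down}\) on \(\mathsf{down}\,V\), and \(\mathsf{force}\) on \(\mathsf{susp}\). For \(\meas{M}\) with \(M\) a value, canonical forms give \(M = \qref{q}\) with \(\Sigma = q\). Consistency then makes \(\MEAS{\rho}{q}\) defined, so the measurement rule applies. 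The \(\mathsf{fix}\) case needs its unfolding rule among the standard rules.

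I expect the gate application case \(\app{g}{V_Q}\) to be the main obstacle. Canonical forms applied to a value \(V_Q\) of simple type \(U\) show that it is a tensor of qubit references and units, and that the identifiers it consumes are exactly \(\qforget{V_Q}\), with \(\qforget{V_Q}\) equal to the parameter list of \(V_Q\), a subset of \(\Sigma\). The gate clause of consistency then yields \(\gapp{g}{\rho}{V_Q}\) defined, and the gate rule fires after splitting \(\Sigma\) as the remaining identifiers together with \(\qforget{V_Q}\). The \(\gsty{init}\) case is the special instance with \(V_Q = \trivq\). A further bookkeeping point is that the statement keeps \(\rho\) fixed while the measurement and gate rules change it. I would read the conclusion as "for some \(\rho'\)", since progress only requires that some step exists.
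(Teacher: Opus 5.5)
Your proposal is correct and follows the route the paper points to when it omits this proof as a simpler analogue of the circuit-normalization progress proof: induction on the typing derivation, a runtime canonical-forms lemma (which the paper also invokes for the gate rule), and the consistency invariant to discharge the measurement and gate cases. Your repairs to the paper's presentation are also right: reading the result state as some \(\rho'\), and requiring a \(\mathsf{fix}\)-unfolding rule. For the \(\multimap_Q\) subtlety, make the fix explicit: \(g\) and \(Q\)-mode \(\lambda\)s must be added to the runtime values outright (the paper's grammar for \(V_Q\) omits them), not merely handled ``in head position''. Otherwise the induction hypothesis already fails at the gate-constant rule, and a top-level configuration of type \(U \multimap_Q S\) would be a counterexample to the statement.
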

We omit both proofs here, since they are simpler than the corresponding theorems for the circuit-normalization semantics.

\section{Mechanization}
\lstset{language=Beluga}
We mechanized a superset of \PQD in the proof assistant Beluga~\cite{Pientka10ijcar} based on the mechanization of Proto-Quipper-A~\cite{Kavanagh26ppdp}.
Mechanizations in Beluga typically proceed by encoding the syntax and semantics of the object language,
in our case, (a superset of) \PQD, in the logical framework LF~\cite{Harper93acm}.
Theorems about the object language are then proven by (recursive), total functions over these LF objects using Beluga's
reasoning framework.
To keep the presentation concise, we focus only on the LF encoding of the syntax and semantics of \PQD, especially
focusing on where our mechanization differs from our developments in this paper.
We refer the reader to our artifact for further details, including the mechanization of the metatheory of \PQD using
Beluga's reasoning framework.

\subsection{Modes and Types}
We represent the modes of \PQD as an LF type with a constructor for each mode:
\begin{lstlisting}
mode : type.  m/u : mode. m/q : mode. ...
\end{lstlisting}
We encode types to be parametrized by their mode; for instance, the type \lstinline|ℚ|, corresponding to  $\qubit$, is a type with mode \lstinline|m/q|.
\begin{lstlisting}
tp : mode → type.
ℚ : tp m/q.
⊸ : tp K → tp K → tp K.
\end{lstlisting}
Here, we make a simplifying choice to allow arbitrary functions for all modes; in particular, we do not restrict $Q$ functions in our mechanization.

\subsection{Terms}
We parameterize our terms with their mode and type, which enables us to give a single definition for constructors for all modes.
\begin{lstlisting}
tm : {K:mode} tp K → type.
app : tm K (⊸ A B) → tm K A → tm K B.
pair : tm K A → tm K B → tm K (⊗ A B).
\end{lstlisting}
This definition states that a term is parameterized by its mode and type (with said mode); while the mode information is redundant since types already contain mode information, lifting the mode at this level simplifies the mechanization.
We use LF's support for higher-order abstract syntax to represent the binding structure of \PQD's terms, which enables us to avoid encoding variables and binders explicitly.
For instance, function abstraction is represented through LF's function space:
\begin{lstlisting}
lam : (tm K A → tm K B) → tm K (⊸ A B).
\end{lstlisting}

We also use a generic match construct to handle our three let-style eliminators by defining
a generic \lstinline|pat| type for patterns that match on some term to produce a new term.
\begin{lstlisting}
pat : {K:mode} tp K → {K':mode} tp K' → type.
match : tm K A → pat K A K' B → tm K' B.
pat/pair : (tm K A → tm K B → tm K' C) → pat K (⊗ A B) K' C. % letp
\end{lstlisting}

\subsection{Mode Safety Predicates}
Since we leverage LF's (intuitionistic) function space to represent all binders of \PQD, we must additionally ensure
that all bound variables respect mode constraints.
In particular, we must ensure that bound variables respect their substructural constraints, i.e., $Q$, $C$, and $L$
variables must be used linearly, and that they do not cause any dependencies that violate our mode ordering.

We extend a technique by \citet{Crary10icfp} to define a local mode safety predicate over LF's function space:
\begin{lstlisting}
msf : (tm Ka A -> tm Kb B) -> type.
\end{lstlisting}
This predicate ``localizes'' mode safety by capturing LF functions that use its input term in a way that
that respect our mode conditions.
In particular, this predicate ensures two invariants: (1) the input term is used in a way that respects its substructural constraints, and (2) the input term is not used in a way that violates our mode ordering.

For instance, LF's identify function that maps terms to identical terms is trivially mode safe for all modes,
since terms of a given mode can depend on itself, and it does not violate any substructural constraints of any mode.
\begin{lstlisting}
msf/var : msf (\x.x).
\end{lstlisting}

However, this requires the assumption to be used, which is too restrictive for checking mode safety of $U$ and $W$ assumptions,
which allow weakening.
We therefore axiomatically state that any function that takes as input a $U$-moded term is mode safe
if it is being used in $U, L, Q$, and $C$ modes (but not $W$, since $U \ngeq W$).
\begin{lstlisting}
msf/var/UU : {f:tm m/u A -> tm m/u B} msf f.
msf/var/UL : {f:tm m/u A -> tm m/l B} msf f.
msf/var/UQ : {f:tm m/u A -> tm m/q B} msf f.
msf/var/UC : {f:tm m/u A -> tm m/c B} msf f.
% similarly for W (where W can appear in W, Q, and C)
\end{lstlisting}

We leverage higher-order unification to concisely encode the lack of contraction in $L/Q/C$, i.e., that
assumptions of these modes cannot be duplicated.
For instance, the following rule enforces that the argument \lstinline|x| does not appear
in \lstinline|N| and that it must only appear in \lstinline|(M x)|, which captures
a context split.
\begin{lstlisting}
msf/app/1 : msf M -> msf (\x. app (M x) N).
\end{lstlisting}
It is worth mentioning that we do not need to explicitly check if \lstinline|x| is of mode $L$, $Q$, or $C$ since
if it is of mode $U$ or $W$, then we can immediately use the axioms defined above to conclude that the function is mode safe.

These mode safety predicates are importantly \emph{local}, in the sense that it only checks the mode safety of a single
variable.
Therefore, to speak of the mode safety of an entire term, we define a \lstinline|bmsf| predicate that ensures that
all bindings within a term satisfy the \lstinline|msf| predicate, which in turn ensures that any closed term that satisfies
\lstinline|bmsf| is properly mode safe.
Its definition recursively traverses terms and checks that all bindings satisfy the \lstinline|msf| predicate:
\begin{lstlisting}
bmsf : tm K A → type.
% app : there are no bindings, so we only traverse recursively
bmsf/app : bmsf M → bmsf N → bmsf (app M N).
% lam : has a binding, so we check that the treats its input in a mode safe way
% and also recursively traverse the body
bmsf/lam : msf (M : tm K A → tm K B)
	   → ({x:tm K A}bmsf x → bmsf (M x))
	   → bmsf (lam M).
\end{lstlisting}
Thus, we have the following correspondence: our on-paper typing derivation $\tp{\cdot}{m}{M}{A}$ corresponds to:
(1) The intrinsically-typed term \lstinline|M: tm m A|; (2) a proof that all internal bindings in $M$ respect mode safety \lstinline|bmsf M|.
Open terms informally correspond to a proof that the term is mode safe with respect to each of its free variables, which is captured by the \lstinline|msf| predicate.

To further increase confidence in our encoding, we mechanize a proof of the \emph{independence principle}, which states that if \lstinline|msf(\x. P)|, then the mode of \lstinline|x| is greater than or equal to the mode of \lstinline|(P x)|.

\subsection{Type-Preserving Reductions}
We encode both circuit-normalization and runtime semantics as relations on terms.
Since our terms are intrinsically-typed, we obtain type preservation (but not preservation of mode safety) for free, since all reductions are, by construction, type-preserving.
We do not mechanize the proofs of preservation of mode safety, because we believe it follows straightforwardly from Kavanagh et al.'s mechanization of Proto-Quipper-A, which does prove that reductions preserve mode safety.

\paragraph{Circuit Normalization}
We define normal forms, canonical forms, neutrals, and the circuit-normalization reductions following the definitions given in \Cref{sec:circuit-normalization}, with some differences due to our mechanization being a superset of \PQD.
\begin{lstlisting}
norm: tm K A → type.  can : tm K A → type.  neu : tm K A → type.
↦   : tm K A → tm K A → type.
\end{lstlisting}

\paragraph{Runtime Semantics}
We define normal forms (which we call values in our mechanization) for the runtime semantics and the runtime reduction relation following the definitions given in \Cref{sec:runtime-semantics}.
Here, instead of tracking qubit identifiers in a configuration, we track them directly in the term by introducing a
constructor \lstinline|exq : (qn → tm K A) → tm K A| that binds a fresh qubit identifier \lstinline|qn| in its body.
We found this formulation more convenient to mechanize, since it avoids the need to track qubit identifiers separately.
\begin{lstlisting}
val : tm K A → type.
⇛ : tm K A → tm K A → type.
\end{lstlisting}

Finally, we mechanize the progress proof for both circuit-normalization and runtime semantics.
In both cases, we use an auxiliary judgment to capture the fact that a term is either normal or can step; we give
the definition for the circuit-normalization semantics, but the runtime semantics is analogous.
\begin{lstlisting}
↦_n : tm K A → type.   ↦_n/n : norm V → ↦_n V.   ↦_n/↦ : ↦ P Q → ↦_n P.
\end{lstlisting}
The progress statement for the circuit-normalization semantics, in words, states that for any term \lstinline|M : tm K A|,
we can derive \lstinline|↦_n M|, i.e., \lstinline|M| is either normal or can step.
Again, the statement for the runtime semantics is analogous.
We emphasize that neither of the progress proofs depend on \lstinline|P| being mode safe at all, which is not surprising since
the mode safety predicates only serve to add restrictions to terms.

\lstset{language=dapq}

\section{Extending Each Layer}
\label{sec:extending-F}
An important consequence of the modular design of \PQD is that all languages are easily extensible without affecting the design of the other languages.
In our case, this is mostly relevant for the metaprogramming language, since we envision no restrictions on the system that this would run on, though we do discuss several extensions to both $Q$ and $C$.

\subsection{Extending the Metaprogramming Language}
We first sketch several desirable extensions to the $U$ fragment of the metaprogramming language.
We can repeat the argument to similarly extend $L$, which we omit here.
The general recipe to extend $U$ with additional constructs is to extend the statics with new types and only the circuit-normalization semantics -- the runtime semantics is unaffected since $U$ terms do not appear in normalized $C$ or $Q$ terms, though we must ensure that our circuit-normalization semantics do fully reduce the additional constructs if they have type $\uql A_L$ for some $A_L$.

\subsubsection{Recursive Types and Recursion}
To enable code generation for families of quantum circuits, e.g., to implement \Cref{ex:fold}, we extend the metaprogramming language with lists alongside its recursor.
We use $\mathsf{list}_A$ to denote a list type for some fixed type $A \in \{U\}$; supporting polymorphism is orthogonal, although
we do not see any technical challenges in doing so.

The typing rules are straightforward and given below, where $\mathsf{nil}$ is the empty list, $\mathsf{cons}(M, N)$ constructs a list with head $M$ and tail $N$, and $\mathsf{rec}(M, x. xs. r. N)$ is the recursor that takes a list $M$, 
the base case $N_1$, and the recursive case $N_2$ that takes the head $x$, tail $xs$, and recursive result $r$.
\[
  \small
  \infer[nil]{\utp{\Gamma}{\mathsf{nil}}{\mathsf{list}_A}}{}
  \quad
  \infer[cons]{\utp{\Gamma}{\mathsf{cons}(M, N)}{\mathsf{list}_A}}{
  \utp{\Gamma}{M}{A} & \utp{\Gamma}{N}{\mathsf{list}_A}}
\]
\[
  \small
  \infer[rec]{\utp{\Gamma}{\mathsf{rec}(M, N_1, x. xs. r. N_2)}{B}}{
  \utp{\Gamma}{M}{\mathsf{list}_A} &
  \utp{\Gamma}{N_1}{B} &
  \utp{\Gamma, x : A, xs : \mathsf{list}_A, r : B}{N_2}{B}}
\]
Since $U$ terms can only depend on unrestricted $U$ assumptions, we omit any context merge operations in the above rules for ease of presentation.

The circuit-normalization semantics correspond exactly to a typical small-step semantics for lists.
Importantly, we ensure that the recursor is fully reduced in the circuit-normalization semantics, i.e., $\mathsf{rec}(-)$ does not appear as a normal form, preserving the property that normalized $C$ terms do not contain any metaprogramming language subterms.

\subsubsection{Code Analysis on $Q$ and $C$}
Our language-level separation of $Q$ and $C$ from the metaprogramming language $U$ and $L$, alongside our circuit-normalization semantics, in principle allows us to
support pattern matching on $Q$ and $C$ syntax in the metaprogramming language.
This, combined with the ability for the metaprogramming language to recursively traverse into $Q$ and $C$ sub-terms, gives a
full metaprogramming system that can, for instance, optimize and analyze hybrid programs.
To that end, we leverage our circuit-normalization semantics, which limits the shape of $Q$ and $C$ terms, due to
\Cref{cor:compilation-normalizes}.

Yet, we require significant work to extend \PQD with a type-safe pattern matching on $Q$ and $C$ terms.
The major challenge is that we must traverse terms that may contain binders, meaning we must support matching on open terms,
and thereby introduce first class contexts.
Since both $Q$ and $C$ are linear languages, we must also ensure that these contexts satisfy linearity constraints upon traversing
terms that cause context splits, which is non-trivial.
Nevertheless, both LINCX~\cite{Georges17esop} and FuSes~\cite{Sano25icfp} have developed type-safe pattern matching on linear terms,
and we expect \PQD to be able to leverage their techniques, though with some simplifications due to our normalization property.

In particular, we believe the technique introduced in FuSes applies readily to our setting, which amounts to \textit{contextualizing} ${\uql} A_Q$ and ${\dcl} A_C$ in the style of \citet{Nanevski08tocl}.
In this extension, ${\uql}(\Delta_Q \vdash A_Q)$ denotes open $Q$ terms of type $A_Q$ with free variables in $\Delta_Q$, which only consists of $Q$-level assumptions, and similarly, ${\dcl}(\Delta_C \vdash A_C)$ denotes open $C$ terms of type $A_C$ with free variables in $\Delta_C$.
We extend both $\mathsf{susp}_{Q}^{L}$ and $\mathsf{down}_{L}^{C}$ to take as input a corresponding context, which act as binders for the free variables inside both constructs.
We further extend $L$ with first-class contexts of $Q$ and $C$ assumptions, first-class $Q$ and $C$ types, and a limited form of dependent types that allows for the contextualized types to depend on these first-class contexts and types.

\subsubsection{Dependent Types}
Another natural extension is to add dependent types to the metaprogramming language, which would enable describing families of circuits parametrized by metaprogramming language values.
Of particular interest is the ability to describe families of circuits parameterized by natural numbers, for instance, a function that generates an $n$-ary circuit depending on a natural number $n$.
This is relevant, for instance, to generate families of quantum circuits of a specific size.

\subsection{Extending $C/Q$}

The modular design of \PQD makes it also straightforward to extend the $C$ and $Q$ modes with additional constructs.

The first kind of extensions that we discuss are extentions to the runtime layer, $C$.
In this paper we presented a minimal instance of $C$, with just bits and basic control flow. We did this intentionally, in order to capture the idea that the runtime classical machine has limited capabilities and/or time in which to execute, as well as to simplify our presentation. However, it is possible to extend $C$ in a variety of ways. 
For instance, we can add additional bitwise operations, such as constants $0$ and $1$ and binary operators like boolean \emph{and} and \emph{or}. We could also add other classical datatypes such as integers, floating point numbers, or lists.
In general, since $C$ is a classical functional language, we do not see any technical challenges in
extending $C$ with common programming features and data structures.

The second kind of extensions are to the quantum runtime layer, $Q$.
Our development of $Q$ treated gates as uninterpreted functional symbols, thereby allowing for a general framework for
describing quantum circuits under any set of gates.
We could extend $Q$ to also support \emph{parametrized gates}, which take as input some classical ($C$-moded) parameter. The typical examples are the Pauli rotation gates $\gsty{RX}(\theta)$, $\gsty{RY}(\theta)$, and $\gsty{RZ}(\theta)$, where $\theta$ is a classical floating point number representing the rotation angle.
Rotation gates are ubiquitous in variational algorithms such as QAOA and other NISQ optimization algorithms~\cite{cerezo2021variational}.

We can support these gates by extending $C$ with a type to represent these angles, e.g., a \lstinline|float| type, and
extending $Q$ with gates that depend on these angles:
\[
  \infer{
    \qtp{\Gamma}{\gsty{RX}_M}{\qubit \multimap_Q \qubit}
  }{
    \ctp{\Gamma}{M}{\mathsf{float}}
  }
  \quad
  \infer{
    \qtp{\Gamma}{\gsty{RY}_M}{\qubit \multimap_Q \qubit}
  }{
    \ctp{\Gamma}{M}{\mathsf{float}}
  }
  \quad
  \infer{
    \qtp{\Gamma}{\gsty{RZ}_M}{\qubit \multimap_Q \qubit}
  }{
    \ctp{\Gamma}{M}{\mathsf{float}}
  }
\]

\section{Related Work}
\label{sec:rel-work}

Quipper and the Proto-Quipper family of languages~\cite{GLRSV2013-pldi,RS2017-pqmodel,FKS2020-lindep,Fu23popl} are a series of programming languages that provide high-level abstractions for programming quantum circuit generation. One of its key insights is the separation of circuit generation time from circuit runtime, which \PQD is highly inspired by. Several of these variants are especially related to \PQD, including Proto-Quipper-A and three varieties that support dynamic lifting.

\paragraph{Proto-Quipper-A}
\PQD builds on Kavanagh et al's Proto-Quipper-A~\cite{Kavanagh26ppdp}, which separates the language into modes based on adjoint type systems. Proto-Quipper-A introduces the $U$, $L$, and $Q$ modes and their associated type systems and a circuit-normalization semantics that mostly corresponds to the circuit-normalization semantics of \PQD for our $U$, $L$, and $Q$ modes.
However, Proto-Quipper-A only considers quantum circuit generation and does not consider execution of the generated circuits nor a classical runtime language to support dynamic lifting.
\PQD's $C$ mode (classical runtime) and its associated statics and dynamics, alongside the runtime semantics, can be viewed as an extension of Proto-Quipper-A.
They mechanize Proto-Quipper-A in Beluga with a similar approach to our mechanization of \PQD, though our additional
language features, including $C/W$ modes and the two semantics alongside their progress proofs, constituted
significant work.

More technically, \PQD omits Proto-Quipper-A's ability for $L$-moded terms to match on $Q$-moded terms, which enables the encoding of Quipper-style combinators \texttt{box} and \texttt{unbox}.
These combinators enable functions in $L$ of type ${\uql}U \multimap_L {\uql}S$ to be directly mapped to circuits ${\uql}(U \multimap_Q S)$, which encodes the $\text{Circ}(U, S)$ type in Proto-Quipper.
We chose to simplify the design of \PQD by omitting these combinators and programming circuits directly in $Q$, which, combined with our circuit-normalization semantics, corresponds exactly to their circuit generation.

We expect that \texttt{box} and \texttt{unbox} could be easily incorporated back into \PQD by re-introducing the ability for $L$-moded terms to match on $Q$-moded terms.
Matching on $Q$-moded terms would be possible to add to \PQD in the same way as we added matching on $Q$-moded terms in the $C$ layer in \Cref{ssec:statics-classical}.




Proto-Quipper-Dyn \cite{Fu23popl} models a view of dynamic lifting where some circuit generation can occur at quantum runtime when the results depend on measurement results. Like \PQD, Proto-Quipper-Dyn has a separate circuit generation semantics and circuit execution semantics, but unlike \PQD, their circuit execution semantics can invoke the circuit generation semantics in the process of dynamic lifting. Proto-Quipper-Dyn works to limit the amount of circuit generation that occurs at runtime by tracking dependencies via a type system, so that circuits or subcircuits without dependencies on dynamic lifting can be boxed at circuit generation time.
In contrast, our language-level separation most importantly enables all quantum circuits to be pre-generated, which is a powerful and desirable feature that is hard to achieve in their setting where circuit-normalization and runtime classical logic are intertwined.

Proto-Quipper-L~\cite{lee2021concrete} takes a slightly different approach, where instead of generating ordinary circuits, they generate a generalized type of circuit (which they refer to as a quantum channel)---a tree structure that includes branching as a result of measurement. For example, instead of a \PQD-style hybrid program such as
\lstinline|if (measure q) then M else N|, a corresponding Proto-Quipper-L program would have the form
\lstinline|measure q M N|. Like \PQD, Proto-Quipper-L has the effect of cleanly separating compilation time from runtime, and no circuit generation is done at runtime. 
However, a negative consequence is that every instance of dynamic lifting doubles the size of this tree structure, which comes at the cost of exponential blowup of the generalized circuits.
Consider the following example in \PQD, where the $C$ layer is extended with the \lstinline|&| operator on bits:
\begin{lstlisting}
if (measure q1) & ... & (measure qn) then M else N
\end{lstlisting}
\PQD executes this function directly in C by performing the conditional \lstinline{b1 & ... & bn} at runtime. The compiled binary only needs a single representation of \lstinline|M| and \lstinline|N|.
In contrast, Proto-Quipper-L must duplicate \lstinline{N} $n-1$ times:
\begin{lstlisting}
  measure q1 N (measure q2 N (... (measure qn N M)))
\end{lstlisting}

Proto-Quipper-K~\cite{colladan2023dynamic} draws on type-and-effect systems to support dynamic lifting. Like Proto-Quipper-L, Proto-Quipper-K produces generalized quantum circuits with a branching structure. In their version of generalized circuits, gates can be applied conditionally based on the classical value of a measurement result. This acturally reflects the capabilities of hardware devices like IBM that support limited on-chip classical control. This representation fixes the scalability issue of Proto-Quipper-L as the example above can be expressed compactly as follows:
\begin{lstlisting}
  measure(q1) -> x1; ... ; measure(qn) -> xn; lift(x1) => b1; ... ; lift(xn) => bn;
  (b1 & ... & bn) ? M  ~(b1 & ... & bn) ? N
\end{lstlisting}
While these generalized circuits are quite expressive, they are still limited to conditional circuit applications. In contrast, \PQD is not limited to a particular model of classical control. While we focus on bits and conditionals in the classical runtime layer, we can easily support more complex classical operations such as runtime function calls or even external library calls.

\paragraph{QWIRE} Another quantum circuit language that models dynamic lifting is QWIRE~\cite{Paykin17popl}, whose linear/non-linear type system is closely related to \PQD's adjoint type system. QWIRE circuits are not normalized under the dynamic lifting operator, which means that after lifting, circuit generation must occur at runtime. 
QWIRE is also relevant in that it enables optimization through pattern-matching of circuits, whereby circuits can be treated as data structures that could be manipulated by the meta-language.

\paragraph{Recursion} \PQD supports recursion at runtime, which is crucial to program repeat-until-success circuits, but it is not
obvious how to extxend Proto-Quipper-L or Proto-Quipper-K with runtime recursion.
The authors of Proto-Quipper-Dyn mention that their system can be extended to support recursion, but it has not been formalized \cite[Section 6]{Fu23popl}.

\paragraph{Adaptive circuits and dynamic lifting}

In recent years, quantum hardware developers have increasingly added support for \emph{adaptive circuits}, in which the QPU itself adapts the control flow based on dynamic measurement results~\cite{carrera2024dynamic, ransford2025helios98qubittrappedionquantum}. IRs that support adaptive circuits include QIR~\cite{lubinski2022advancing}, OpenQASM 3.0~\cite{cross2022openqasm}, and HUGR~\cite{koch2025hugr}.

We argue that \PQD's syntactic separation into classical and quantum modes is appropriate for hardware that supports adaptive circuits. Even with adaptive quantum control on-chip, there are still reasons to syntactically separate classical from quantum operations. For instance, grouping quantum operations together allows compilers to more easily apply quantum circuit optimization techniques, while grouping classical operations together allows compilers to take advantage of the wide array of classical optimization infrastructure. 

Second, even as hardware support for adaptive circuits grows, the on-chip programming model is often still limited. For example, IBM machines targeting OpenQASM 2.0 allow gates to be conditioned on individual measurement result bits, but do not support more general classical computation~\cite{cross2017open}. In such a setting, the $Q$ mode could be extended to capture this restricted form of on-chip classical control, while the $C$ mode handles richer computations delegated to the full classical control system.
The related technique of circuit cutting~\cite{peng2020simulating} partitions a circuit into subcircuits executed independently to work around qubit limitations; the hybrid structure of \PQD naturally accommodates the orchestration logic required to coordinate subcircuit execution and combine the results.

\section{Conclusion}
We present \PQD, a language designed specifically for programming hybrid quantum-classical systems.
We provide a system that separates our two stages of circuit-normalization and runtime, which we see as crucial to
realistically implement hybrid algorithms; time spent at runtime is time spent with the quantum co-processor active, which is orders of magnitude more expensive and should be minimized.
We give a formalization of \PQD's statics and importantly, its two dynamics, which distinguish between circuit-normalization and runtime execution of hybrid programs.
We prove preservation and progress for both semantics, and also mechanize those proofs in the Beluga proof assistant.

\PQD's use of adjoint logic makes it modular and robust, and we see many interesting future directions, which we have sketched in \Cref{sec:extending-F}.

\newpage

\bibliography{bibliography.bib}

\newpage
\appendix

\section{Additional Definitions}

\subsection{Context Merge}
\label{app:merge}
\begin{align*}
  \Gamma_1, x:A_m \bowtie \Gamma_2 &= (\Gamma_1 \bowtie \Gamma_2), x:A_m
    \tag{for $m \in \{L, Q, C\}$}
  \\
  \Gamma_1 \bowtie \Gamma_2, x:A_m &= (\Gamma_1 \bowtie \Gamma_2), x:A_m
    \tag{for $m \in \{L, Q, C\}$}
  \\
  \Gamma_1, x:A_m \bowtie \Gamma_2, x:A_m &= (\Gamma_1 \bowtie \Gamma_2), x:A_m
    \tag{for $m \in \{U, W\}$}
\end{align*}

\subsection{Substitution}
\label{app:substitution}
\[
  [V/x]x = V
  \quad
  [V/x]y = y
  \quad
  [V/x](\app{M}{N}) = \app{[V/x]M}{[V/x]N}
  \quad
  [V/x](\lam{y}{M}) = \lam{y}{[V/x]M}
\]
\[
  [V/x](\trivm{m}) = \trivm{m}
  \quad
  [V/x]\pairm{M}{N}{m} = \pairm{[V/x]M}{[V/x]N}{m}
\]
\[
  [V/x](\susp{m}{n}{M}) = \susp{m}{n}{[V/x]M}
  \quad
  [V/x](\down{n}{m}{M}) = \down{n}{m}{[V/x]M}
\]
\[
  [V/x](\ltriv{M}{N}{m}) = \ltriv{[V/x]M}{[V/x]N}{m}
\]
\[
  [V/x](\lpair{y}{z}{M}{N}{m}) = \lpair{y}{z}{[V/x]M}{[V/x]N}{m}
\]
\[
  [V/x](\ldown{m}{n}{y}{M}{N}) = \ldown{m}{n}{y}{[V/x]M}{[V/x]N}
\]
\[
  [V/x](\ifc{M}{N_1}{N_2}) = \ifc{[V/x]M}{[V/x]N_1}{[V/x]N_2}
\]
\[
  [V/x](\meas{M}) = \meas{([V/x]M)}
\]

\subsection{Commuting Conversions for Circuit-Normalization Semantics}
\label{app:compile-commuting}
Note that $C$-terms can match on $Q$-terms, so the commuting conversions below involving matches
on unit and pairs are overloading three cases: $C$-terms matching on $Q$-terms, $C$-terms matching on $C$-terms, and $Q$-terms matching on $Q$-terms.
\[
  \infer{\cst{\app{(\ltriv{R}{V}{m})}{N}}{\ltriv{R}{(\app{V}{N})}{m}}}
    {}
  \quad
  \infer{\cst{\app{M}{(\ltriv{R}{V}{m})}}{\ltriv{R}{(\app{M}{V})}{m}}}
    {}
\]
\[
  \infer{\cst{\ltriv{(\ltriv{R}{V}{m})}{N}{n}}{\ltriv{R}{(\ltriv{V}{N}{n})}{m}}}{}
\]
\[
  \infer{\cst{\lpair{x}{y}{(\ltriv{R}{V}{m})}{N}{n}}{\ltriv{R}{(\lpair{x}{y}{V}{N}{n})}{m}}}{}
\]
\[
  \infer{\cst{\ldown{m}{n}{x}{(\ltriv{R}{V}{m})}{N}}{\ltriv{R}{(\ldown{m}{n}{x}{V}{N})}{m}}}{}
\]
\[
  \infer{\cst{\ifc{(\ltriv{R}{V}{m})}{N_1}{N_2}}{\ltriv{R}{(\ifc{V}{N_1}{N_2})}{m}}}{}
\]
The cases for the pair elimination, down elimination, and if constructs are similar.

\section{Select Cases for Theorems}
\label{app:proofs}
\begin{theorem}[Type Preservation of Circuit-Normalization Semantics]
  If \ $\tp{\Gamma_{WCQ}}{m}{M}{A}$ and $\cst{M}{M'}$, then $\tp{\Gamma_{WCQ}}{m}{M'}{A}$.
\end{theorem}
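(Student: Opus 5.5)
We proceed by induction on the derivation of $\cst{M}{M'}$, in each case inverting the typing derivation of $\tp{\Gamma_{WCQ}}{m}{M}{A}$ and rebuilding a derivation for $M'$ in the same context $\Gamma_{WCQ}$ at the same type.

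The cases fall into three groups.

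\textbf{Congruence rules.} These include the congruences under $\lambda$ and $\mathsf{fix}$ in $Q$ and $C$, under $\mathsf{measure}$, and in the branches of conditionals and neutral matches. Inversion splits the context, for instance as $\Gamma_1 \bowtie \Gamma_2$, with a subderivation for the reducing subterm. The induction hypothesis applies because each piece of the split is again a $W/C/Q$ context, possibly extended by the bound variable. For the binders under which we reduce, the bound variable is always of mode $Q$, $C$ or $W$: for $\lambda$ in $Q$ and $C$, for the $C$-level $\mathsf{fix}$ whose variable has type $\dwc\ucw C$, and for let-bodies in $Q$ and $C$. So the extended context is still of the form $\Gamma_{WCQ}$.

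One subtlety is the $\mathsf{fix}$ rule, whose context must be $\Gamma_W,\Gamma_U$. Since our context has no $U$ part, this forces $\Gamma_{WCQ}$ to consist of $W$ assumptions only, so the induction hypothesis is used with $\Gamma_W, f{:}\dwc\ucw C$. The remaining congruences, which reduce the scrutinee or function position, are routine. Those inside $\mathsf{down}$ need the mode side condition $\Gamma \geq n$. This is preserved because the context is literally unchanged.

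\textbf{$\beta$ rules.} We use \Cref{lem:subst}. For $\app{(\lam{x}{M})}{V}$ we invert $app$ and $lam$ to obtain $\Gamma_1, x{:}A \vdash M : B$ and $\Gamma_2 \vdash V : A$, and substitution yields $\Gamma_1 \bowtie \Gamma_2 \vdash [V/x]M : B$. The pair case applies the lemma twice; we must check that the intermediate contexts merge associatively, which holds because $\bowtie$ is associative and commutative up to exchange. The $\mathsf{down}$ case applies the lemma at mode $n$, with $V$ typed at mode $n$ in $\Gamma_1$; here the side condition $\Gamma_1 \geq n$ ensures the merged context is admissible for the body. The $\mathsf{force}/\mathsf{susp}$ and $\mathsf{let}\,\langle\rangle$ rules are immediate by inversion.

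One point needs care. In the $\mathsf{down}$ and cross-mode cases, \Cref{lem:subst} must hold when the substituted value's mode differs from that of the term, for example a $Q$-value substituted into a $C$-term in $\mathsf{let}_C^Q$. We read the lemma as mode-heterogeneous: $V$ is typed at the mode of $x$. If it is only stated homogeneously, we first generalize it by the same induction.

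\textbf{Commuting conversions.} We expect these to be the main obstacle, because the rearranged term crosses modes. A typical case is $\ltriv{(\ltriv{R}{V}{m})}{N}{n} \mapsto \ltriv{R}{(\ltriv{V}{N}{n})}{m}$. Here $R$ has mode $m \in \{Q,C\}$, while the outer eliminator lives in mode $n$, and after the rewrite the $m$-mode match must have its body in mode $n$.

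We first check that such a rule exists. When $m = Q$ and $n = C$, it is precisely $letu_C^Q$. When $m = n$, it is the ordinary rule. When $m = C$ and $n = Q$, the conversion cannot arise, because $C$-terms cannot occur inside $Q$-terms, since $C \not\geq Q$ fails to be required there. We rule this case out by inversion.

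We then re-associate the context splits. The original derivation gives $\Gamma_R \bowtie \Gamma_V$ for the inner match and $(\Gamma_R \bowtie \Gamma_V)\bowtie\Gamma_N$ overall. The new derivation needs $\Gamma_R \bowtie (\Gamma_V \bowtie \Gamma_N)$, which is equal by associativity of $\bowtie$. For the pair and $\mathsf{down}$ variants, the bound variables of the moved match do not occur in $N$ or in the other operand, by the usual freshness convention, so weakening is not needed. For the conditional, both branches share $\Gamma_2$, and the same argument goes through.

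We handle each commuting rule by this uniform pattern: invert, check the cross-mode rule exists, reassociate the merge, and reapply.
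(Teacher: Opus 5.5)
Your proposal is correct and takes the same route as the paper: induction on the derivation of $\cst{M}{M'}$, inverting and re-applying the typing rules in the congruence cases, and \Cref{lem:subst} for the $\beta$ rules. Your $\mathsf{fix}$ case, where the context is forced to be purely $W$, is more precise than the paper's; you also cover the commuting conversions, which the paper omits, and you rightly note that the substitution lemma must be used mode-heterogeneously (e.g.\ for $letp_C^Q$ and $\dqc$-elimination). Two small fixes are needed: when a commuting conversion moves a $W$-binder $\ldown{W}{C}{x}{R}{V}$ over $N$, you do need to weaken $N$'s derivation by $x{:}A_W$, because $\bowtie$ requires unrestricted assumptions on both sides; and $C$-terms cannot occur inside $Q$-terms because no $Q$ rule has a $C$-moded premise, not because of the mode order, since the paper has $C \geq Q$.
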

\begin{proof}
  By induction on the derivation of $\cst{M}{M'}$.
 \begin{case}[General Congruence Case]
  \[
    \infer{\cst{\down{n}{m}{M}}{\down{n}{m}{M'}}}
    {\cst{M}{M'}}
  \]
  Then $\tp{\Gamma_{WCQ}}{m}{\down{n}{m}{M}}{{{\downarrow^n_m} A_n}}$ by assumption and $\Gamma_{WCQ} \geq n$ and $\tp{\Gamma_{WCQ}}{n}{M}{A_n}$ by inversion on ${\downarrow} I$. We obtain $\tp{\Gamma_{WCQ}}{n}{M'}{A_n}$ by the induction hypothesis, and thus $\tp{\Gamma_{WCQ}}{m}{\down{n}{m}{M'}}{{{\downarrow^n_m} A_n}}$ by ${\downarrow} I$.
\end{case}
\begin{case}[$\beta$ Case]
  \[
    \infer{\cst{\app{(\lam{x}{M})}{V}}{[V/x]M}}{}
  \]
  Then $\tp{\Gamma_{WCQ}}{m}{\app{(\lam{x}{M})}{V}}{B}$ for some $m$ by assumption, and thus $\tp{\Gamma_{WCQ}, x : A}{m}{M}{B}$ and $\tp{\Gamma_{WCQ}}{m}{V}{A}$ for some $A$ by inversion on $app$ and $lam$ on the lefthand premise. We obtain $\tp{\Gamma_{WCQ}}{m}{[V/x]M}{B}$ by the substitution lemma (\Cref{lem:subst}).
\end{case}
\begin{case}[$C$-Specific Congruence Case]
  \[
    \infer{\cst{\fixc{x}{M_C}}{\fixc{x}{M'_C}}}
    {\cst{M_C}{M'_C}}
  \]
  Then $\tp{\cdot}{C}{\fixc{x}{M_C}}{A}$ for some $A$ by assumption, and thus $\tp{x : {\dwc} {\ucw} A}{C}{M_C}{A}$ by inversion on $\mathit{fix}$.
  We obtain $\tp{x : {\dwc} {\ucw} A}{C}{M'_C}{A}$ by the induction hypothesis, and thus $\tp{\cdot}{C}{\fixc{x}{M'_C}}{A}$ by $\mathit{fix}$.
  \qedhere
\end{case}
\end{proof}
 
\begin{theorem}[Progress of Circuit-Generation Semantics]
If \ $\tp{\Gamma_{WCQ}}{m}{M}{A}$, then either $M$ is a value or there exists some $M'$ such that $\cst{M}{M'}$.
\end{theorem}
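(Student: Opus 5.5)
We will argue by induction on the typing derivation of $\tp{\Gamma_{WCQ}}{m}{M}{A}$, generalizing over the context $\Gamma_{WCQ}$. Generalizing is needed because the $lam$, $letp$, $letu$, ${\downarrow}E$ and $fix$ rules in modes $Q$ and $C$ extend the context with $Q$-, $C$- or $W$-moded assumptions. The extended context is still of the form $\Gamma_{WCQ}$, so the induction hypothesis applies to bodies under these binders. This is essential, since the congruence rules reduce under $Q$/$C$ functions, neutral matches, conditionals and $\mathsf{fix}$.

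For $U$ and $L$ terms the context must in fact be empty, or contain only $W$ assumptions that cannot occur, since $U,L \ngeq Q, C, W$. These cases are therefore the standard closed-term progress argument. Variables cannot arise, and each eliminator's principal subterm is either reducible, giving a congruence step, or a value. For a value, \Cref{lem:compile-norm} in the $U/L$ clauses gives a canonical form of the right shape, so a $\beta$ rule fires. Suspensions and lambdas are values outright. The term $\down{n}{m}{M}$ either steps by congruence or is $\down{n}{m}{V_n}$, which is canonical.

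The interesting cases are $Q$ and $C$. Here the plan is a case analysis on the principal subterm of each eliminator, using the IH and \Cref{lem:compile-norm}. Take $\app{M}{N}$ in mode $Q$. If $M$ steps we are done. Otherwise $M$ is normal of function type, so by the lemma it is $\lam{x}{V}$, a gate $g$, or a $(\mathsf{let}_Q)$ prefix. In the last case the commuting conversion fires. Otherwise we turn to $N$: it either steps, or is canonical or neutral, in which case we apply $\beta$ or obtain the neutral $\app{g}{K_Q}$ / $\app{g}{R_Q}$. It may also be a $\mathsf{let}$ prefix, in which case the right commuting conversion fires.

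Matches $\ltriv{M}{N}{m}$, $\lpair{x}{y}{M}{N}{m}$, $\ldown{n}{C}{x}{M}{N}$ and $\ifc{M}{N_1}{N_2}$ follow the same three-way split: the scrutinee steps; or it is canonical and $\beta$ fires; or it is a $\mathsf{let}$ prefix and a commuting conversion fires. The remaining option is that it is neutral. Then the IH is applied to the body or branches, which either step by the neutral-body congruence rules or are normal, and in the latter case the whole term is a normal form. The terms $\meas{M}$ and $\fixc{f}{M}$ reduce to the IH on their body, and $\force{C}{W}{M}$ uses the $W$ clause of the lemma to obtain either $\susp{C}{W}{M'}$, so $\beta$ fires, or a variable, giving a neutral term.

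The main obstacle is bookkeeping: we must check that every ``normal'' outcome of the case split actually matches the grammar of $V_Q$/$V_C$. In particular, the measurement subtlety needs care. A $C$ scrutinee that is normal could be $\meas{V_Q}$, which is an $R_C$ but not an $R_C'$. We rely on \Cref{lem:compile-norm} to show this only happens at type $\dwc A_W$, where the only eliminator is ${\downarrow}E$ and the neutral-body congruence applies. We must also verify that every shape of blocked eliminator is covered, either by a listed commuting conversion or by a neutral form; for instance a conditional whose scrutinee is a $\mathsf{let}$ prefix uses the listed if-commuting rule. We would handle this by enumerating eliminator and prefix pairs against \Cref{app:compile-commuting}, as in the mechanized proof.
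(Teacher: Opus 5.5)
Your strategy matches the paper's proof. You use induction on the typing derivation and split each eliminator, via the induction hypothesis and \Cref{lem:compile-norm}, into a congruence step, a $\beta$-step, a commuting conversion, or a normal form over a neutral scrutinee. This is exactly the paper's application case; you simply spell out more of the remaining cases.

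One justification is false, however. It is not true that $L$-terms under $\Gamma_{WCQ}$ have an empty context (for $U$ it is right). Only $U$- and $W$-terms are barred from $Q$/$C$ assumptions. Because $Q \geq L$ and $C \geq L$, an $L$-term can have free $Q$- and $C$-moded variables. Such terms genuinely arise in your induction once you reduce under $Q$/$C$ binders: reducing under the $Q$-function $\lam{y}{\forcelq{\suspql{y}}}$ leads you to $\suspql{y}$ typed under $y : \qubit$. The statement itself also covers $\down{C}{L}{(\app{f}{x})}$ with $f, x$ of mode $C$.

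The $L$ cases still go through, but for a different reason than the one you give:
\begin{itemize}
\item $\Gamma_{WCQ}$ contains no $L$- or $U$-moded variables;
\item every $L$-eliminator scrutinizes an $L$- or $U$-term, and such terms are canonical when normal;
\item the case $\down{C}{L}{M_C}$ needs your induction hypothesis on a possibly open $C$-term $M_C$.
\end{itemize}

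A smaller point: the $R_C$ versus $R_C'$ concern plays no role in progress. Every neutral-scrutinee congruence rule, and every production of $V_C$ built on a neutral, accepts an arbitrary $R_C$, so neutral scrutinees of any type are handled uniformly. Do not make the case split depend on the $R_C'$ refinement. Your observation ``only at type $\dwc A_W$'' holds for a bare $\meas{V_Q}$, but not for neutrals that contain one. For example, $\app{f}{\meas{q}}$ is a normal term of type $\cbit$ when $f : \dwc\ucw\cbit \multimap_C \cbit$.
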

\begin{proof}
  By induction on the derivation of $\tp{\Gamma_{WCQ}}{m}{M}{A}$. We give a proof for the application case, which covers most of the arguments used in the other cases.
  \begin{case}
  \[
    \infer[app]{\tp{\GammaIII}{m}{\appm{M}{N}{m}}{B_m}}
    {\tp{\Gamma_1}{m}{M_m}{A_m \multimap_m B_m}
    &\tpm{\Gamma_2}{m}{N}{A}}
  \]
  Terms $M_m, N_m$ are either values or step to some $M_m', N_m'$, respectively, by induction hypothesis.
  We proceed by simultaneously casing on whether $M_m$ and $N_m$ are normal or not, and if they are, their syntaxes as given by the normal forms lemma (\Cref{lem:compile-norm}).
  We give several representative subcases:
  \begin{subcase}[$M_m = \lam{x}{N_m}$ and $m \neq C$ or $M_m = \lam{x}{V_C}$]
    $\cst{\appm{M}{N}{m}}{[N_m/x]M_m}$ by the $\beta$ rule for application.
  \end{subcase}
  \begin{subcase}[$M_m = g$ and $N_m = K_Q$ or $N_m = R_Q$]
    $\app{g}{K_Q}$ and $\app{g}{R_Q}$ are normal.
  \end{subcase}
  \begin{subcase}[$M_m = g$ and $N_m = \ltrivq{R_Q}{V_Q}$]
    $\cst{\appm{M}{N}{m}}{\ltrivq{R_Q}{(\appm{g}{V}{Q})}}$ by commuting conversion. Similar for $N_m = \lpairq{x}{y}{R_Q}{V_Q}$.
  \end{subcase}
  \begin{subcase}[$M_m$ steps to some $M_m'$]
    $\cst{\appm{M}{N}{m}}{\appm{M'}{N}{m}}$ by the congruence rule for application. Similar for when $M_m$ is normal and $N_m$ steps to some $N_m'$.
    \qedhere
  \end{subcase}
  \end{case}
\end{proof}

\end{document}
